\newtheorem{example}{Example}
\newtheorem{theorem}{Theorem}
\newtheorem{definition}{Definition}
\newtheorem{proposition}{Proposition}
\newtheorem{corollary}{Corollary}
\newcommand{\thistheoremname}{}
\newtheorem*{genericthm*}{\thistheoremname}
\newenvironment{namedthm*}[1]
  {\renewcommand{\thistheoremname}{#1}%
   \begin{genericthm*}}
  {\end{genericthm*}}
\title{Strategyproof Mechanisms for Group-Fair Facility Location Problems}
\author{
Houyu Zhou$^1$
\and
Minming Li$^1$\and
Hau Chan$^2$
\affiliations
$^1$City University of Hong Kong\\
$^2$University of Nebraska-Lincoln\\
\emails
houyuzhou2-c@my.cityu.edu.hk,
minming.li@cityu.edu.hk,
hchan3@unl.edu
}
\begin{document}

\maketitle

\begin{abstract}
    We study the facility location problems where agents are located on a real line and divided into groups based on criteria such as ethnicity or age. 
    Our aim is to design mechanisms 
    to locate a facility to approximately minimize the costs of groups of agents to the facility fairly while eliciting the agents' locations truthfully. 
    We first explore various well-motivated group fairness cost objectives for the problems and show that many natural objectives have an unbounded approximation ratio. 
    We then consider minimizing the maximum total group cost and minimizing the average group cost objectives. 
    For these objectives, we show that 
    existing classical mechanisms (e.g., median) and new group-based mechanisms provide bounded approximation ratios, 
    where the group-based mechanisms can achieve better ratios. 
    We also provide lower bounds for both objectives. 
    To measure fairness between groups and within each group, we study a new notion of intergroup and intragroup fairness (IIF) . 
    We consider two IIF objectives 
    and provide mechanisms with tight approximation ratios. 
\end{abstract}

\section{Introduction}
In the classical facility location problems (FLPs) from the mechanism design perspective \cite{Procaccia:2009aa}, 
we have a set $N$ of agents, and each agent $i$ has a privately known location $x_i \in \mathbb{R}$ on the real line. 
The typical 
goal is to design a strategyproof mechanism 
that elicits true location information from the agents 
and locates a facility (e.g., a public library, park, or representative) $y \in \mathbb{R}$ 
on the real line to (approximately) minimize a given 
\emph{cost objective} that measures agents' distances to the facility. 
Subsequent works have explored settings with more than 
one facility (e.g., \cite{Procaccia:2009aa,Lu:2010aa,Fotakis:2014aa,paolo2015heterogeneous,DBLP:conf/aaim/ChenFLD20,DBLP:conf/atal/ZouL15}) and 
various models/objectives (e.g., \cite{Sui:2013aa,Sui:2015aa,Filos-Ratsikas:2017aa,Cai:2016aa,feldman2013strategyproof,limei2016newobjective,Aziz:2019aa,Aziz:2020aa}). 
See \cite{survey} for a  survey. 

Motivated by the importance of ensuring group fairness and equality among groups of agents in our society, 
\begin{wrapfigure}{r}{0.20\textwidth} 
\vspace{-0.2cm}
\hspace{-0.7cm}
    \centering
    \includegraphics[width=0.23\textwidth]{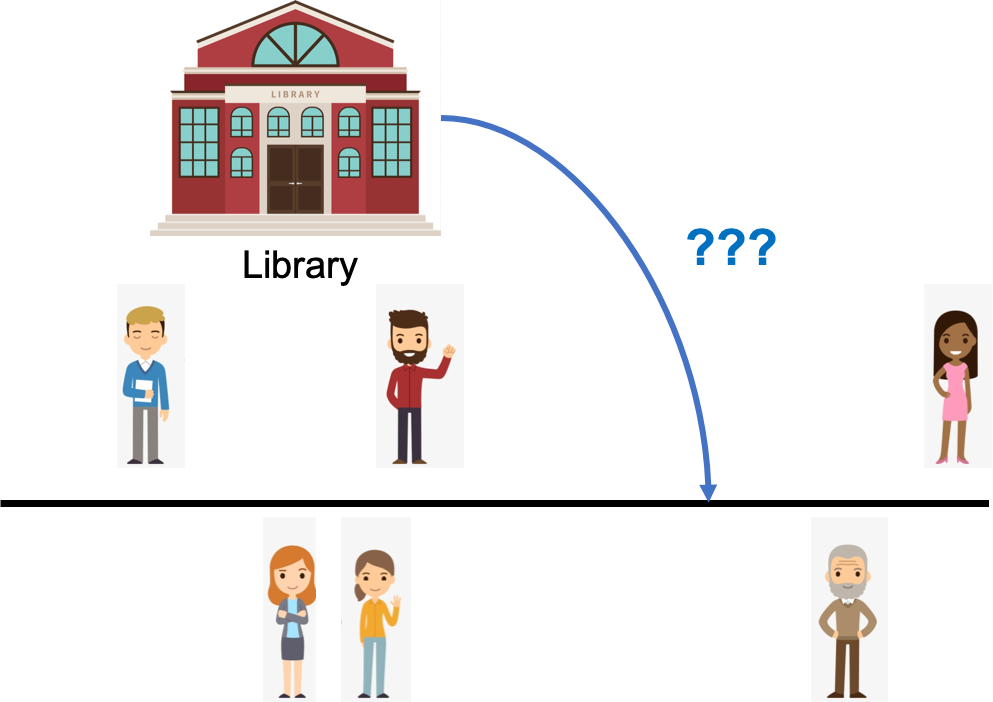}
      \captionsetup{oneside,margin={-0.5cm,0cm}}
    \caption{Group-Fair FLPs}
    \label{fig:flps}
    \vspace{-0.5cm}
\end{wrapfigure}
we consider the \emph{group-fair FLPs} (see Figure \ref{fig:flps}) where the set $N$ of $n$ agents is partitioned into $m$ groups, $G_1, ..., G_m$,
based on criteria (e.g., gender, race, or age) 
and aim to design strategyproof mechanisms to locate the facility to serve 
groups of agents to ensure some desired forms of group fairness and the truthfulness of agents.  
Potential applications include determining the best location of a public facility (e.g., park or library) to serve a subset of agents  
as to provide fair access to different groups (e.g., based on race, gender, and age) 
and determining the best candidate or ideal point to select in an election within an organization 
as to ensure fair representative for groups of agents. 

\paragraph{Our Key Challenges and Contribution.}
It is easy to see that any facility location (e.g., either at a fixed point or at an agent's location) 
can be unfair for different groups of agents under some FLP instances. 
Therefore, in order to understand the degree of unfairness suffered by each group of agents, 
we use group-fair objectives to mathematically quantify and compare the unfairness (or inequality) 
of each group (e.g., based on group costs) given the location of the facility. 
Given the group-fair cost objectives, we aim to design strategyproof mechanisms  
to locate a facility that ensures the costs of groups are fair and can (approximately) minimize the overall unfairness of all groups. 
As a result, our key challenges are  
(1) defining appropriate group-fair objectives 
and (2) designing strategyproof mechanisms to (approximately) optimize a given group-fair objective. 

As our starting point, we consider group-fair cost objectives 
introduced by well-established domain experts \cite{marsh1994equity} 
in the optimization literature of FLPs dated back in the 1990s. 
As we showed (in Theorem \ref{thm:first3unbounded}), 
it is impossible to design any reasonable strategyproof mechanisms with 
a bounded approximation for many of these objectives. 
Despite the negative result, we identify 
other well-motivated group-fair cost objectives 
and introduce novel objectives that capture 
\emph{intergroup} and \emph{intragroup} fairness (IIF) for agents that are within each group, 
in which the design of strategypoof mechanisms with constant approximation ratios is possible. 
Using simple existing mechanisms and new mechanisms designed by us,   
our results (through rigorous non-trivial arguments) 
demonstrate the possibility of achieving approximately group-fair outcomes for these objectives. 
Our results are summarized in Table \ref{tab:Summary}. 
More specifically: 

\begin{table}[htb!]
    \centering
    \begin{tabular}{c|c|c|c}
    \toprule
        Mechanism & UB ($mtgc$) & UB ($magc$) & LB\\
    \midrule
        MDM & $\Omega(m)$ & $3$ &\multirow{3}*{2} \\
        LDM & $\Omega(n)$ & $\Omega(n)$ \\
        MGDM & $3$ & $3$  & \\
        \hline
        RM & $\Omega(n)$ & $\Omega(n)$ &\multirow{2}*{1.5}\\
        NRM & $\Omega(n)$ & $2$ & \\
        \bottomrule
        \toprule
        Mechanism & UB ($IIF_1$) & UB ($IIF_2$) & LB\\
        \hline
        k-LDM & \multicolumn{2}{c|}{$4$} & 4\\
    \bottomrule
    \end{tabular}
    \caption{Summary of Results, where UB means upper bound, LB means lower bound. See Sections 2-4 for details.}
    \label{tab:Summary}
\end{table}

\begin{itemize}
    \item For the group-fair objective that aims to minimize the maximum total group cost ($mtgc$), 
    we show that the classical mechanisms (i.e., Median Deterministic Mechanism (MDM), Leftmost Deterministic Mechanism (LDM), and Randomized Mechanism (RM)) proposed by \cite{Procaccia:2009aa} 
    have parameterized approximation ratios. 
    In contrast, the mechanism we proposed, Majority Group Deterministic Mechanism (MGDM), that leverages group information and puts the facility at the median of the largest group, obtains an improved approximation ratio of 3. 
    We also provide a lower bound of 2 for this objective.  
    We show that putting the facility at the median of any group with a smaller size will result in a larger approximation ratio. 
    \item For the objective of minimizing the maximum average group cost ($magc$), we provide the upper bounds for all of the  considered mechanisms and the lower bounds for the objective. The Narrow Randomized Mechanism (NRM), which leverages randomization and group information, modifies the range used by RM for placing facilities to achieve a better approximation ratio.
    \item We introduce a new notion of intergroup and intragroup fairness (IIF) 
    that considers group-fairness among the groups and within groups. 
    We consider two objectives based on the IIF concept: $IIF_{1}$, $IIF_{2}$,     where the first considers these two fairness measures as two separate indicators, while the second considers these two fairness measures as one combined indicator for each group.
    For both objectives, we establish upper bounds of 4 by using the kth-Location Deterministic Mechanism (k-LDM), which puts the facility at the $k^{th}$ (leftmost) agent location, and the matching lower bounds of $4$. 
\end{itemize}



\paragraph{Related Work.}
We will elaborate below the most related works of facility location problems (FLPs) to ours that consider 
some forms of individual fairness and envy (i.e., a special case of group fairness). 
From the optimization perspective, 
early works (see e.g., \cite{mcallister1976equity,marsh1994equity,Mulligan:1991ug}) in FLPs 
have examined objectives that quantify various inequity notions. 
For instance, 
\cite{marsh1994equity} considers a group-fairness objective (i.e., the Center objective in \cite{marsh1994equity}) 
that is equivalent to our $mtgc$ group-fair objective and similar to our $magc$ group-fair objective.  
These works do not consider FLPs from the mechanism design perspective. 

Although previous works have not considered the general notion of group fairness explicitly (e.g, $|G_j| > 1$ for some $j$), 
a few works have considered some form of individual fairness 
(i.e., the maximum cost objective when $m=n$) and envy in general. 
The seminal work of approximate mechanism design without money in FLPs \cite{Procaccia:2009aa} 
considers the design of strategyproof mechanisms that approximately minimize certain cost objectives such as total cost or maximum cost. 
An individual fair objective that is considered in \cite{Procaccia:2009aa} is that of the maximum cost objective, 
which aims to minimize the maximum cost of the agent farthest from the facility (i.e., $m=n$). 
For the maximum cost objective, \cite{Procaccia:2009aa} establish tight upper and 
lower bounds of $2$ for deterministic mechanisms and $1.5$ for randomized mechanisms. 
However, applying these mechanisms to some of our objectives directly, such as the $mtgc$ and randomized part in the $magc$, 
would yield worse approximation ratios. 
There are also envy notions such as minimax envy \cite{Cai:2016aa,DBLP:conf/aaim/ChenFLD20}, 
which aims to minimize the (normalized) maximum difference between any two agents' costs, and envy ratio \cite{DBLP:conf/aaim/LiuDCFN20,ding2020facility}, 
which aims to minimize the maximum over the ratios between any two agents' utilities. 
Other works and variations on facility location problems can be found in a recent survey \cite{survey}.

\section{Preliminary}
In this section, we define group-fair facility location problems and consider several group-fair social objectives. We then show that some of these objectives have unbounded approximation ratios. 

\subsection{Group-Fair Facility Location Problems}
Let $N=\{1,2,...,n\}$ be a set of agents on the real line
and $G=\{G_1, G_2,...,G_m\}$ be the set of (disjoint) groups of the agents. 
Each agent $i \in N$ has the profile $r_i=(x_i, g_i)$ where $x_i \in \mathbb{R}$ is the location of agent $i$ and $g_i \in \{1, ..., m\}$ is the group membership of agent $i$. Each agent knows the
mechanism and then reports her location, which may be different
from her true location. We use $|G_j|$ to denote the number of the agents in group $G_j$. Without loss of generality, we assume that $x_1\leq x_2 \leq ...\leq x_n$. A profile $r=\{r_1,r_2,..,r_n\}$ is a collection of the location and group information. 
A deterministic mechanism is a function $f$ which maps profile $r$ to a facility location $y \in \mathbb{R}$. A randomized mechanism is a function $f$, which maps profile $r$ to a facility location $Y$, where $Y$ is a set of probability distributions over $\mathbb{R}$. 
Let $d(a,b)=|a-b|$ be the distance between two points $a, b \in \mathbb{R}$. 
Naturally, given a deterministic (or randomized) mechanism $f$ and the profile $r$, the cost of agent $i \in N$ is defined as $c(r,x_i)=d(f(r),x_i)$ 
(or the expected distance $\mathbb{E}_{Y\sim f(r)}[d(Y, x_i)]$). 

Our goal is to design mechanisms that enforce truthfulness while (approximately) optimizing an objective function. 

\begin{definition} \label{def:sp}
    A mechanism $f$ is \textbf{strategyproof (SP)} if and only if an agent can never benefit by reporting a false location, regardless of the strategies of the other agents. More formally, for any profile $r=\{r_1,...,r_n\}$, for any $i\in N$ and for any $x_i'\in \mathbb{R}$, let $r_i'=(x_i', g_i)$. We have $c(f(r), x_i) \leq c(f(r_i', r_{-i}), x_i)$ 
    where $r_{-i}$ is the profile of all agents except agent $i$. 
\end{definition}

Notice that when $f$ is randomized, Definition \ref{def:sp} implies SP in expectation.

In the following, we discuss several group-fair cost objectives that model some form of inequity.
We invoke the legal notions of disparate treatment \cite{Barocas:2016aa,Zimmer:1996aa} 
and disparate impact \cite{Barocas:2016aa,Rutherglen:1987aa}, 
the optimization version of FLPs \cite{mcallister1976equity,marsh1994equity,Mulligan:1991ug},  
and recent studies in optimization problems \cite{Tsang:2019aa,Celis:2018aa} to derive and motivate the following group-fair objectives. 

\paragraph{Group-fair Cost Objectives.} 
We consider defining the group cost from two main perspectives. 
One is the total group cost, which is the sum of the costs of all the group members. 
Through the objective, we hope to measure the inequality of each group in terms of its group cost to the facility.  
Hence, our first group-fair cost objective is to minimize the maximum total group cost ($mtgc$) 
to ensure that each group as a whole is not too far from the facility.  
More specifically, given a true profile $r$ and a facility location $y$, 
\begin{align*}
    mtgc(y, r)=\max_{1\leq j \leq m}\left\{\sum_{i\in {G_j}}c(y,x_i)\right\}. 
\end{align*}
Our second group-fair cost objective is to minimize the maximum average group cost ($magc$). Therefore, we have
\begin{align*}
    magc(y, r)=\max_{1\leq j \leq m}\left\{\frac{\sum_{i\in {G_j}}c(y,x_i)}{|G_j|}\right\},
\end{align*}
and we hope to ensure that each group, on average, is not too far from the facility. 
We measure the performance of a mechanism $f$ by comparing the objective that $f$ achieves and the objective achieved by the optimal solution.
If there exists a number $\alpha$ such that for any profile $r$, 
the output from $f$ is within $\alpha$ times the objective achieved by the optimal solution, 
then we say the approximation ratio of $f$ is $\alpha$.

\subsection{Alternative Group-Fair Social Objectives}\label{sec::agso}
In addition to the objectives defined earlier, we can also consider the following natural objectives 
for group-fair facility location problems: 
\begin{align*}
(a) \max_{1\leq j \leq m}\{h_j\} - \min_{1\leq j \leq m}\{h_j\}  \quad \text{ and } \quad  (b) \frac{\max_{1\leq j \leq m}\{h_j\}}{\min_{1\leq j \leq m}\{h_j\}}, 
\end{align*}
where $h_j$ is a function that can be (i) $\sum_{i\in {G_j}}c(y,x_i)$,  (ii) $\frac{\sum_{i\in {G_j}}c(y,x_i)}{|G_j|}$ or (iii) $\max_{i\in G_j}c(y,x_i)$, 
which implies that each of (a) and (b) has three different group-fair objectives.
In general, both (a) and (b) capture the difference between groups in terms of 
difference and ratio, respectively, under the desirable $h_i$. 
For objectives under type (a), it can be seen as a group envy extended from individual envy works~\cite{Cai:2016aa,DBLP:conf/aaim/ChenFLD20}, and type (a) with function (i) is exactly measure (7) in~\cite{marsh1994equity}. For objectives under type (b), it can be seen as a group envy ratio extended from previous individual envy ratio studies~\cite{DBLP:conf/aaim/LiuDCFN20,ding2020facility}.

Surprisingly, we show that any 
deterministic strategyproof mechanism for those objectives 
have unbounded approximation ratios. 

\begin{theorem}\label{thm:first3unbounded}
    Any deterministic strategyproof mechanism does not have a finite approximation ratio for minimizing the three different objectives (i), (ii), and (iii) under (a).
\end{theorem}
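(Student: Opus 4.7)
The plan is to exhibit two carefully chosen instances on which all three objectives have optimum value zero, and then use strategyproofness to rule out the possibility that a deterministic mechanism achieves zero on both. As soon as the mechanism fails on either instance, its value is strictly positive while OPT $=0$, giving an unbounded approximation ratio.

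First I would observe the following unifying reduction: when every group is a singleton, each of the three candidates $h_j$ in (i), (ii), (iii) collapses to $|y-x_j|$, so the three objectives (a)(i), (a)(ii), (a)(iii) coincide. Hence one instance family can handle all three cases simultaneously. Concretely, consider two instances with one agent per group:
\begin{align*}
    I_1:\ G_1=\{0\},\ G_2=\{1\}, \qquad I_2:\ G_1=\{0\},\ G_2=\{2\}.
\end{align*}
Placing the facility at the midpoint ($y=1/2$ on $I_1$, $y=1$ on $I_2$) makes the two $h_j$ values equal, so the $\max-\min$ objective is $0$ on each instance.

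Now let $f$ be any deterministic SP mechanism and set $y_1=f(I_1)$, $y_2=f(I_2)$. If $y_1\neq 1/2$ then $f$'s objective on $I_1$ is strictly positive while OPT$(I_1)=0$, so the ratio is already unbounded; the same reasoning applies if $y_2\neq 1$. The only remaining case is $y_1=1/2$ and $y_2=1$, which I would rule out by a single misreport: the agent in $G_2$ of $I_1$, whose true location is $1$, currently pays $|y_1-1|=1/2$; reporting her location as $2$ (her group label is unchanged) turns the profile into $I_2$, so the mechanism now outputs $y_2=1$ and her cost drops to $|1-1|=0<1/2$. This contradicts strategyproofness.

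The main obstacle I anticipate is excluding the benign case where $f$ happens to land on a zero-objective point on a specific instance, since otherwise no contradiction arises from a single example. The trick is the paired-instance argument: SP glues the two outputs together so that they cannot both equal their respective balanced midpoints, because one agent's two possible reports ($1$ versus $2$) sit on opposite sides of those midpoints. If a reviewer prefers, the same argument works for any pair $(a,b)$ with $a<b<3a$ (not just $a=1,b=2$), which highlights that the impossibility is robust rather than dependent on a contrived scale.
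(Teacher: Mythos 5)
The proposal is correct and takes essentially the same approach as the paper's proof: two singleton-group profiles with zero optimal objective value force the mechanism's outputs to the respective midpoints, and a single profitable misreport linking the two profiles contradicts strategyproofness. The only cosmetic difference is that you move the second agent outward (from $1$ to $2$) and have the agent in the first profile deviate, whereas the paper moves her inward (to $1/2$) and has the agent in the second profile deviate.
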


\begin{proof}
    We prove this theorem by contradiction. Assume that there exists a  deterministic strategyproof mechanism $f$ with a finite approximation ratio for those objective functions. Consider a profile $r$ with one agent in group $G_1$ at $0$ and one agent in group $G_2$ at $1$. The optimal location is $\frac{1}{2}$ for 
    all (i), (ii), (iii) under (a)
    and all of their objective values are $0$. Therefore, $f$ has to output $\frac{1}{2}$, otherwise the objective value for $f$ is not equal to $0$ and then the approximation ratio is a non-zero number divided by zero, which is unbounded. 
    
    Then consider another profile $r'$ with one agent in group $G_1$ at $0$ and one agent in group $G_2$ at $\frac{1}{2}$. The optimal location is $\frac{1}{4}$ for 
    all (i), (ii), (iii) under (a)
    and all of their objective values are $0$ and then $f$ has to output $\frac{1}{4}$. In that case, given the profile $r'$, the agent at $\frac{1}{2}$ can benefit by misreporting to 1, thus moving the facility location from $\frac{1}{4}$ to $\frac{1}{2}$. This is a contradiction to the strategyproofness .
\end{proof}

\begin{theorem}\label{thm:last3unbounded}
    Any deterministic strategyproof mechanism does not have a finite approximation ratio for minimizing the three different objectives (i), (ii), and (iii)  under (b). 
\end{theorem}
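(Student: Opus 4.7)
The plan is to mimic the two-profile structure used to prove Theorem \ref{thm:first3unbounded}, exploiting strategyproofness to force the mechanism to coincide with an agent's location on a related profile. The twist is that under type (b), the minimum possible ratio is $1$ rather than $0$, so we cannot pin the mechanism to the midpoint on the first profile as is done in the proof of Theorem \ref{thm:first3unbounded}. Instead, we use the complementary observation that any location at which some group's cost vanishes yields an infinite ratio, and hence must be avoided by any mechanism with a finite approximation ratio.

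First, I would assume for contradiction that a deterministic SP mechanism $f$ achieves some finite approximation ratio $\alpha$ for the objective $\max_j h_j / \min_j h_j$, where $h_j$ is any of (i)--(iii). Consider the profile $r$ with one agent in $G_1$ at $0$ and one agent in $G_2$ at $1$. Since both groups are singletons, the three choices of $h_j$ coincide, and the optimum is at $y = 1/2$ with ratio $1$. Any output in $\{0, 1\}$ makes $\min_j h_j = 0$ and hence gives ratio $\infty$, so finite $\alpha$ forces $y^* := f(r) \notin \{0, 1\}$.

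Next, I would construct a second profile $r'$ by keeping agent $1$ at $0$ in $G_1$ and placing agent $2$ at $y^*$ in $G_2$; since $y^* \neq 1$, this profile differs from $r$. Apply strategyproofness to agent $2$: by misreporting her location as $1$, the mechanism reverts to $f(r) = y^*$, giving her cost $|y^* - y^*| = 0$. Therefore her truthful cost $|f(r') - y^*|$ must be at most $0$, forcing $f(r') = y^*$. On $r'$, the mechanism places the facility exactly on agent $2$, so $h_2 = 0$ while $h_1 = |y^*|$; for $y^* \neq 0$ this gives ratio $\infty$, whereas the optimum at $y^*/2$ balances both groups and gives ratio $1$, so the approximation ratio on $r'$ is unbounded. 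The residual case $y^* = 0$ already contradicts the assumed $\alpha$ on $r$ itself, since then $h_1 = 0$ and $h_2 = 1$ give ratio $\infty$ versus the optimum ratio of $1$.

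The main obstacle relative to Theorem \ref{thm:first3unbounded} is that the ratio objective cannot be driven to zero, so the first profile no longer pins $f(r)$ to a unique point. The key idea that salvages the argument is that a finite approximation still forbids $f(r) \in \{0, 1\}$, which is precisely what is needed to launch the strategyproofness deviation and conclude. Since the construction uses only singleton groups, the same proof works uniformly for (i), (ii), and (iii).
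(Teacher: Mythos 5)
Your proposal is correct and uses essentially the same two-profile construction as the paper: the first profile forces $f(r)\notin\{0,1\}$, and the second profile places the $G_2$ agent at $f(r)$ so that strategyproofness and the finite-ratio assumption clash. The only cosmetic difference is the direction of the final contradiction — you use SP to force $f(r')=y^*$ and then contradict the finite ratio, whereas the paper uses the finite ratio to force $f(r')\neq y^*$ and then contradicts SP — but these are logically the same argument.
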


\begin{proof}
    We will reuse the profiles in Theorem \ref{thm:first3unbounded} and prove this theorem by contradiction. Assume that there exists a deterministic  strategyproof mechanism $f$ with a finite approximation ratio. Consider a profile $r$ with one agent in group $G_1$ at $0$ and one agent in group $G_2$ at $1$. The optimal location is $\frac{1}{2}$  and without loss of generality, we assume that $f(r)=\frac{1}{2}+\epsilon$, $0\le\epsilon<\frac{1}{2}$.

    Then consider another profile $r'$ with one agent in group $G_1$ at $0$ and one agent in group $G_2$ at $\frac{1}{2}+\epsilon$. Then $f$ can output any location except $0$ and $\frac{1}{2}+\epsilon$, otherwise 
    all (i), (ii), (iii) under (b)
    are unbounded but the optimal objective value is $1$, then the approximation ratio is unbounded. In that case, given the profile $r'$, the agent in group $G_2$ can benefit by misreporting to $1$, in contradiction to strategyproofness.
\end{proof}

Notice that \cite{marsh1994equity} consider 20 group-fair objectives in total. 
However, using a similar technique as in the proof of Theorem~\ref{thm:first3unbounded}, we can show that 
any deterministic strategyproof mechanism does not have a finite approximation ratio for all of the objectives mentioned in \cite{marsh1994equity} except measure (1), which is the $mtgc$ in our paper (we will explore this objective in Section~\ref{sec::mtgc}). 
The main reason is that for the other objectives in \cite{marsh1994equity} containing the form such as one group cost minus the other group cost (e.g., objective type (a) with function (i) we mentioned earlier), we can easily construct  profiles similar to those in the proof of  Theorem~\ref{thm:first3unbounded} where the optimal value is $0$.

\section{Mechanism Design for Group-Fair Objectives}
In this section, we consider two group fair objectives, the maximum total group cost ($mtgc$) and the maximum average group cost ($magc$). 
First, we consider three classical strategyproof mechanisms proposed by \cite{Procaccia:2009aa}, which are independent of group information.

\begin{namedthm*}{Median Deterministic Mechanism}[MDM] Given $x_1\leq x_2 \leq x_3 ... \leq x_n$, put the facility at $y=x_{\lceil\frac{n}{2}\rceil}$.
\end{namedthm*}

\begin{namedthm*}{Leftmost Deterministic Mechanism}[LDM] Given $x_1\leq x_2 \leq x_3 ... \leq x_n$, put the facility at $y=x_1$.
\end{namedthm*}

\begin{namedthm*}{Randomized Mechanism}[RM] 
    Given $x_1\leq x_2 \leq x_3 ... \leq x_n$, put the facility at $x_1$ with probability 1/4, $x_n$ with probability 1/4, $\frac{x_1+x_n}{2}$ with probability 1/2.
\end{namedthm*}

Because none of the three mechanisms above leverages group information, 
their strategyproofness still holds in our model. However, 
for some group-fair objectives, they might perform poorly. 
Thus, we propose the following deterministic and randomized mechanisms, 
which depend on group information, in which case strategyproofness is no longer obvious.


\begin{namedthm*}{Majority Group Deterministic Mechanism}[MGDM] 
    Let $g \in \arg\max_{1\leq j\leq m}|G_j|$, put facility $y$ at the median of group $G_g$ (break ties by choosing the smallest index).
\end{namedthm*}



\begin{proposition}\label{pro:MGDM_sp}
    MGDM is strategyproof.
\end{proposition}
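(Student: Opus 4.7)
The plan is to reduce strategyproofness of MGDM to the well-known strategyproofness of the classical median mechanism, by first fixing the identity of the ``majority group'' and then doing a two-case analysis on whether the deviating agent belongs to it.

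First I would observe that under the model, an agent's report consists only of her location $x_i'$; her group label $g_i$ is part of her fixed profile and not something she can manipulate. Consequently the multiset of group sizes $\{|G_1|,\dots,|G_m|\}$ is invariant under any unilateral deviation, and so is the index $g\in\arg\max_{1\le j\le m}|G_j|$ selected by MGDM (the smallest-index tie-break depends only on group sizes). Thus, regardless of how agent $i$ misreports, MGDM will still compute the output as the median of the \emph{same} group $G_g$.

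Next I would split into two cases. \textbf{Case 1:} $g_i\neq g$. Then the output of MGDM is determined entirely by the reported locations of agents in $G_g$, none of whom is agent $i$. Hence $f(r_i',r_{-i})=f(r)$ for every $x_i'\in\mathbb{R}$, so $c(f(r_i',r_{-i}),x_i)=c(f(r),x_i)$ and agent $i$ has no profitable deviation. \textbf{Case 2:} $g_i=g$. Restricted to reports of agents in $G_g$, MGDM outputs the $\lceil|G_g|/2\rceil$-th order statistic of those reports, i.e.\ it is exactly the classical median mechanism applied to the sub-population $G_g$. Since the median mechanism is known to be strategyproof on the line (Moulin; also used by Procaccia and Tennenholtz for the MDM), no agent in $G_g$ can strictly decrease her distance to the chosen point by misreporting her own location while the other reports in $G_g$ are held fixed. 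Applied to agent $i$, this gives $c(f(r),x_i)\le c(f(r_i',r_{-i}),x_i)$ for all $x_i'$.

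Combining the two cases yields the strategyproofness condition of Definition~\ref{def:sp}. I do not expect any real obstacle: the only subtlety worth flagging explicitly is the invariance of the tie-breaking rule under location misreports, which is why the set of relevant medianable agents (namely $G_g$) does not shift when agent $i$ lies. Once that invariance is in hand, the argument is a direct reduction to the classical median result, so the write-up should be quite short.
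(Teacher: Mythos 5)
Your proof is correct and follows essentially the same route as the paper's: split on whether the deviating agent belongs to the majority group $G_g$, note that non-members cannot affect the output, and observe that members face the classical median mechanism on $G_g$ (which the paper re-derives in one line rather than citing). Your explicit remark that the identity of $G_g$ is invariant under location misreports is a small but worthwhile addition of rigor that the paper leaves implicit.
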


\begin{proof}
    For agents not in group $G_g$, they cannot manipulate the output of the mechanism, thus they have no incentive to misreport their locations. 
    For agents in group $G_g$, they cannot change the output unless they misreport their locations to the other side of $y$, 
    which makes the facility move farther away from them. Thus, in this case, if the agent misreports, the facility will move farther away from the agent.
    Hence, they have no incentive to misreport their locations. Therefore, MGDM is strategyproof.
\end{proof}

Our goal is to consider a facility’s location that strikes a balance between unfairness across all of the groups while accounting for potential misreporting. While MGDM might focus on a group of the agents when locating a facility, it implicitly considers the unfairness induced by other groups of the agents given their preferences in our later approximation analyses. Without such a consideration, the mechanisms can perform badly and induce prohibitively high costs or unfairness for groups. 
Notice that we can alternatively consider other deterministic mechanisms such as choosing the median of all group median agents or choosing the median of another group, but they cannot guarantee  strategyproofness or achieve better approximation ratios (i.e., counterexamples are given in the in Appendix) for both objectives. 


\begin{namedthm*}{Narrow Randomized Mechanism}[NRM]
    Let $M$ be a set of median agents of all groups (choose the left one if there are two median agents in the group) and let $ml=\arg\min_{i\in M}\{x_i\}$ and $mr = \arg\max_{i\in M}\{x_i\}$, put the facility at $x_{ml}$ with probability 1/4, $x_{mr}$ with probability 1/4, and $\frac{x_{ml}+x_{mr}}{2}$ with probability 1/2.
\end{namedthm*}

\begin{proposition}\label{pro::NRM_sp}
    NRM is strategyproof.
\end{proposition}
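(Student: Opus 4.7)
The plan is to argue that no agent can strictly reduce their expected cost by misreporting under NRM. Fix an agent $i$ with true location $x_i$ in group $j = g_i$, let $M_j$ denote the median location of $G_j$ under truthful reporting, and let $M_j'$ denote the median after some misreport by agent $i$. The key initial observation is that agent $i$'s report influences only $M_j$, since the medians of all other groups depend only on the reports of the agents in those groups. Writing $L = \min_{k \neq j} M_k$ and $R = \max_{k \neq j} M_k$ (if $m = 1$ the mechanism degenerates to the group median, which is strategyproof), we have $(x_{ml}, x_{mr}) = (\min(M_j, L),\, \max(M_j, R))$, so agent $i$'s misreport changes the support of the facility only through $M_j \mapsto M_j'$.

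I would then derive, by a direct calculation, the closed form of the expected cost for an agent at location $p$, writing $(a,b) = (x_{ml}, x_{mr})$:
\[
\mathbb{E}[\mathrm{cost}] = \begin{cases} \dfrac{a+b}{2} - p & \text{if } p \le a,\\ \dfrac{b-a}{4} + \dfrac{1}{2}\bigl|p - \tfrac{a+b}{2}\bigr| & \text{if } a \le p \le b,\\ p - \dfrac{a+b}{2} & \text{if } p \ge b. \end{cases}
\]
Next I invoke the classical strategyproofness of the median on $G_j$, which pins down how $M_j$ can move: if $x_i < M_j$ then $M_j' \ge M_j$; if $x_i > M_j$ then $M_j' \le M_j$; and if $x_i = M_j$ (so agent $i$ is the group median), then $M_j'$ is confined to the closed interval spanned by the two $G_j$-agents adjacent to the median slot. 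In particular, the median can never be pulled toward $x_i$ beyond its current position.

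The remaining argument is a case analysis, splitting on (i) the position of $M_j$ relative to $[L, R]$ and (ii) the side of $M_j$ on which $x_i$ sits. In each case I translate the admissible moves of $M_j$ into $(a', b')$ pairs and verify, using the cost formula above, that $\mathbb{E}[\mathrm{cost}]$ weakly increases. Most cases reduce to monotonicity of the midpoint $(a+b)/2$: when $x_i$ lies outside $[x_{ml}, x_{mr}]$ the cost depends only on the midpoint, and the constraint that $M_j$ cannot move toward $x_i$ forces the midpoint to move away from $x_i$ as well.

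The main obstacle will be the subcase where agent $i$ is itself the median ($x_i = M_j$) and $M_j$ lies strictly inside $[L, R]$. Here the agent can shift $M_j'$ in either direction, and the agent's own position $p = M_j$ sits inside the support $[x_{ml}, x_{mr}]$, placing the cost in the middle regime where the ``midpoint moves away'' argument fails. The critical check is that pushing $M_j'$ strictly outside $[L, R]$ weakly increases the cost at the unchanged $p = M_j$; this boils down to comparing, for $M_j' > R$, the original value $\tfrac{R - L}{4} + \tfrac{1}{2}|M_j - \tfrac{L+R}{2}|$ against the new value $\tfrac{M_j' - L}{4} + \tfrac{1}{2}|M_j - \tfrac{L + M_j'}{2}|$, a short algebraic computation that resolves into four sub-cases based on which side of the old and new midpoints $M_j$ falls on; the case $M_j' < L$ follows by symmetry, completing the proof.
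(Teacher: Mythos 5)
Your proposal is correct and follows essentially the same route as the paper's proof: only the misreporting agent's own group median can move, it can only move away from the agent, and the key inequality is the same $\tfrac{1}{4}\Delta - \tfrac{1}{2}\cdot\tfrac{1}{2}\Delta + \tfrac{1}{4}\cdot 0 \ge 0$ cancellation (your reverse-triangle-inequality computation on the closed-form cost). Your version is a more explicit and careful elaboration — in particular the piecewise cost formula and the boundary case where the agent is itself an extreme median — but it is not a different argument.
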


\begin{proof}
    By definition we have $x_{ml}\leq x_{mr}$. 
    For agents whose locations are outside of $[x_{ml},x_{mr}]$, they can only change the facility location by misreporting their locations to the other side of their own group's median point, which might further make one of $x_{ml}$ and $x_{mr}$ (together with the midpoint) move farther away. Therefore, they have no incentive to misreport. 

    For agents whose locations are in the interval $[x_{ml},x_{mr}]$, they can only change the facility location by misreporting their locations to the other side of their own group's median point, which might further make either $x_{ml}$ or $x_{mr}$ move farther away. But the midpoint of two medians may be closer to them. Suppose that the median point moves by $\Delta$ after an agent misreports and that we get a new facility location distribution $Y'$, then the midpoint of two medians will move by $\frac{1}{2}\Delta$ and the cost of the misreporting agent satisfies
    $    \mathbb{E}(d(Y',x_i))-\mathbb{E}(d(Y,x_i)) 
    \leq  \frac{1}{4} \Delta + \frac{1}{2}(-\frac{1}{2}\Delta) + \frac{1}{4}(0) = 0. $
    Therefore, they have no incentive to misreport their locations. Hence, NRM is strategyproof. 
\end{proof}

When designing the NRM, 
we observe that for any profile, the optimal solutions of both group-fair objectives are in $[x_{ml},x_{mr}]$ since all group total (average) costs increase from $x_{ml}$ to the left and from $x_{mr}$ to the right. Then putting the facility outside the interval we mentioned above with a certain probability will only hurt the mechanism's performance. Therefore, it would be better to design a randomized mechanism which only puts the facility in $[x_{ml},x_{mr}]$. 
Based on this fact, we use the same probabilities as RM to guarantee the strategyproofness, but use $x_{ml}$ and $x_{mr}$ 
instead of $x_1$ and $x_n$
 to achieve better performance.

In the following subsections, 
we show the approximation ratios of these mechanisms and provide lower bounds for minimizing the two group-fair objectives. 

\subsection{Maximum Total Group Cost}\label{sec::mtgc}


In this subsection, we focus on minimizing the maximum total group cost, Table \ref{tab:mtgc_sum} summaries the results.
\begin{table}[htb!]
    \centering
    \begin{tabular}{c|c|c}
    \toprule
        Mechanism & Approximation Ratio & Lower Bound\\
    \midrule
        MDM & $\Omega(m)$ &\multirow{3}*{2} \\
        LDM & $\Omega(n)$ &\\
        MGDM & $3$  & \\
        \hline
        RM & $\Omega(n)$  &\multirow{2}*{1.5}\\
        NRM & $\Omega(n)$ & \\
    \bottomrule
    \end{tabular}
    \caption{Summary of Results for the $mtgc$.}
    \label{tab:mtgc_sum}
\end{table}

We first provide upper bounds for the considered deterministic mechanisms discussed earlier. 
It turns out that existing mechanisms, MDM and LDM, do not perform well for the $mtgc$ objective. 

\begin{proposition}\label{pro:MDMub}
    MDM has an approximation ratio of $\Omega(m)$ for minimizing the $mtgc$.
\end{proposition}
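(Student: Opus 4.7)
The plan is to prove the $\Omega(m)$ lower bound by exhibiting a family of instances indexed by $m \ge 3$ on which MDM is a factor of $m-1$ worse than the optimum. Specifically, I would construct the profile: place $m-2$ agents of group $G_1$ at location $0$, and place a single agent from each of $G_2, \dots, G_m$ at location $1$. The total number of agents is then $n = 2m-3$.

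First I would compute MDM's output and its cost. Since $2m-3$ is odd, $\lceil n/2 \rceil = m-1$; the first $m-2$ positions in sorted order are the agents of $G_1$ at $0$, so the agent at position $m-1$ is one of those at $1$. Hence MDM outputs $y = 1$, at which $G_1$ incurs total cost $m-2$ while every singleton group $G_j$ (for $j \ge 2$) incurs cost $0$, yielding $mtgc = m-2$.

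Next I would upper-bound the optimum by supplying a better placement. Choosing $y^* = 1/(m-1)$ balances the two kinds of groups: $G_1$'s total cost becomes $(m-2)/(m-1)$, while each singleton group pays $1 - 1/(m-1) = (m-2)/(m-1)$. Therefore $\mathrm{OPT} \le (m-2)/(m-1)$, and dividing yields an approximation ratio of at least $(m-2) \cdot (m-1)/(m-2) = m-1 = \Omega(m)$.

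The only point requiring care is the median index: one must verify that $\lceil (2m-3)/2 \rceil = m-1$ strictly exceeds the position $m-2$ of the last $G_1$ agent, so that MDM indeed lands on location $1$. Both facts are immediate, and small-case sanity checks for $m = 3, 4$ confirm the construction behaves as intended. No other subtlety arises, and the rest of the argument is elementary arithmetic.
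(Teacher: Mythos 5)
Your construction is correct and is essentially the paper's own argument in mirror image: the paper places one agent of each group at $0$ and the bulk of $G_1$ at $1$ so that the median lands at $0$ far from the heavy group, whereas you place the heavy group at $0$ and the singletons at $1$; both yield an explicit family with ratio $\Theta(m)$. The median-index check and the bound $\mathrm{OPT}\le (m-2)/(m-1)$ are verified correctly, so the proof is sound.
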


\begin{proof}
    Consider the case (see Figure \ref{fig:ub_mtgc_m}) with $m$ agents and $m$ groups $G_1,...,G_m$ where all agents at $0$ belong to different groups, and $m$ agents at $1$ belong to $G_1$. MDM puts the facility at $0$ achieving the $mtgc$ of $m$, but putting the facility at
    location $1$ can achieve the optimal $mtgc$ value $1$. Hence, the approximation ratio of MDM for the $mtgc$ is at least $m$.
\end{proof}

\begin{figure}[htp!]
    \centering
    \includegraphics[width=0.4\linewidth]{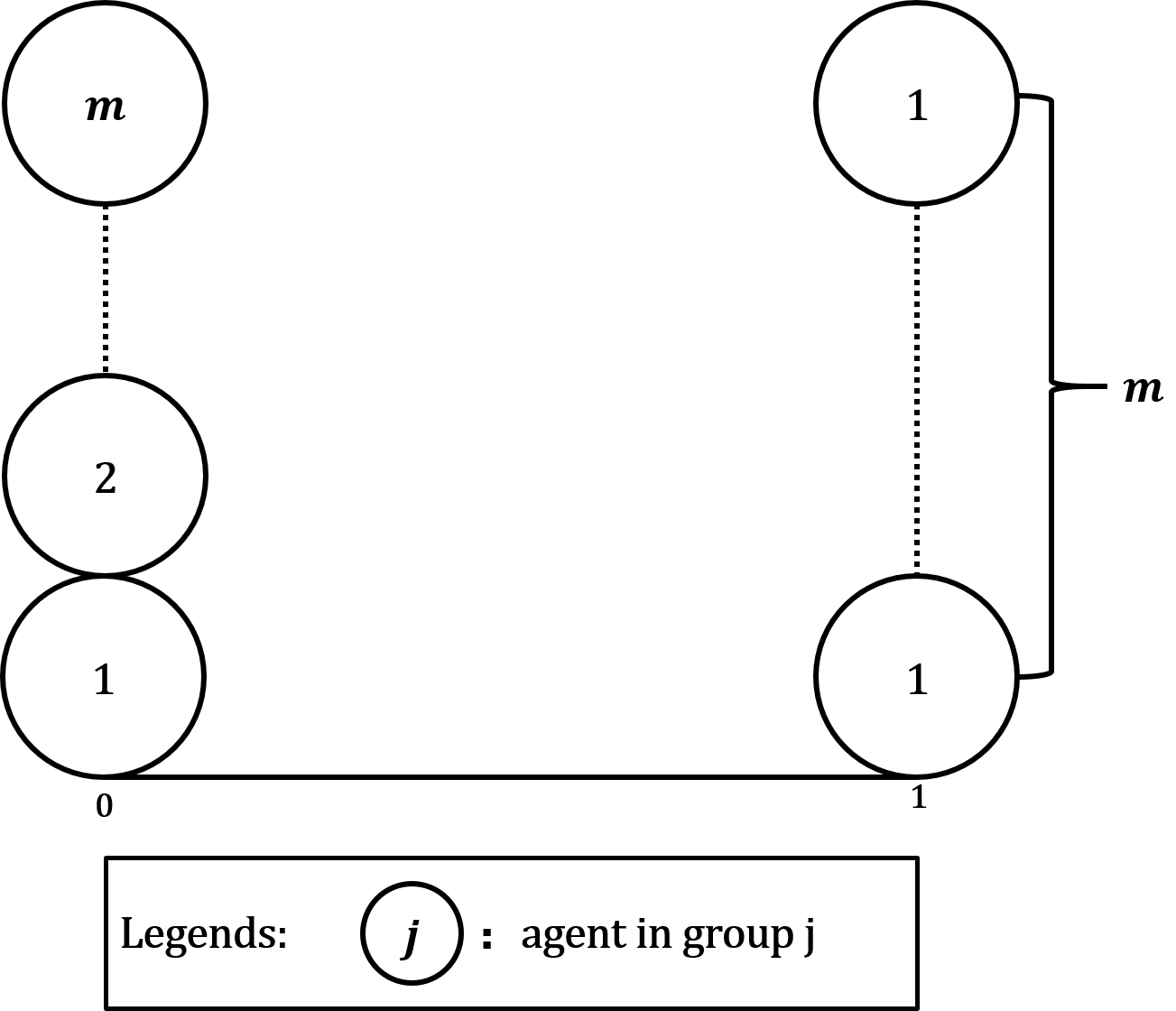}
    \caption{Profile used in the proof of Proposition~\ref{pro:MDMub}.}
    \label{fig:ub_mtgc_m}
\end{figure}

\begin{proposition}\label{pro:LDMub}
    LDM has an approximation ratio of $\Omega(n)$ for minimizing the $mtgc$.
\end{proposition}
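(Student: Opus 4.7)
The plan is to exhibit a single family of instances on which LDM incurs an $mtgc$ that is linear in $n$, while the optimum location keeps $mtgc$ at a constant. Since LDM is defined to place the facility at $x_1$, the natural way to hurt it is to ensure that almost all agents cluster far from $x_1$ and that those distant agents belong to a single group (so their costs accumulate into one group's total and blow up the maximum).

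Concretely, I would take the instance with $n$ agents on the line where agent $1$ is placed at $0$ and belongs to group $G_1$, and agents $2,\dots,n$ are all placed at $1$ and belong to group $G_2$. Following LDM, the output is $y_{LDM} = x_1 = 0$; then the total cost of $G_1$ is $0$ and the total cost of $G_2$ is $n-1$, so $mtgc(y_{LDM}, r) = n-1$. For the optimum side, placing the facility at $y^* = 1$ gives a total cost of $1$ for $G_1$ and $0$ for $G_2$, hence $mtgc(y^*, r) = 1$; this upper-bounds the optimal $mtgc$ by $1$. The approximation ratio on this instance is therefore at least $n-1$, yielding the $\Omega(n)$ bound.

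The argument is essentially a one-instance lower bound, so the conceptual work is just choosing the right profile and I do not expect a substantive obstacle. The one subtlety to mention is that the construction uses only $m=2$ groups, so the poor behavior of LDM is not driven by the number of groups but genuinely by the number of agents on the right cluster; this distinguishes the bound from Proposition~\ref{pro:MDMub}, where the hard instance required $m$ to grow. I would finish by noting that the same construction scales: replacing the single right-cluster with $n-1$ coincident agents makes $n$ the tight parameter, which matches the $\Omega(n)$ entry in Table~\ref{tab:mtgc_sum}.
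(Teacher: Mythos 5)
Your proposal is correct and uses essentially the same instance as the paper: one agent at $0$ and $n-1$ agents at $1$, giving $mtgc(x_1,r)=n-1$ versus an optimum of $1$. The only cosmetic difference is that the paper places all $n$ agents in a single group while you split them into two, which changes nothing in the computation.
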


\begin{proof}
    Consider the case with one agent at $0$ and $n-1$ agents at $1$ where all of them belong to the same group. LDM puts the facility at $0$ achieving the $mtgc$ of $n-1$, but putting the facility at location $1$ can achieve the optimal $mtgc$ value $1$. Hence, the approximation ratio of LDM for the $mtgc$ is at least $n-1$.
\end{proof}

Next, we show that our mechanism, MGDM, that leverages group information has a good constant approximation ratio. 

\begin{theorem}\label{thm:GFDMub}
    MGDM has an approximation ratio of $3$ for minimizing the $mtgc$.
\end{theorem}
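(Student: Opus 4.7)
The plan is to compare the $mtgc$ value achieved by MGDM's output $y$ (the median of the largest group $G_g$) against the optimal $mtgc(y^*)$, group by group. First I would handle the majority group $G_g$ itself: since $y$ is its median, the standard $1$-dimensional median property gives $\sum_{i\in G_g}d(y,x_i)\leq \sum_{i\in G_g}d(y^*,x_i)\leq mtgc(y^*)$, so this group is already within ratio $1$ and poses no trouble.

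The real work is bounding $\sum_{i\in G_j}d(y,x_i)$ for an arbitrary other group $G_j$. I would apply the triangle inequality pointwise to get
\begin{equation*}
\sum_{i\in G_j}d(y,x_i)\leq |G_j|\cdot d(y,y^*)+\sum_{i\in G_j}d(y^*,x_i)\leq |G_j|\cdot d(y,y^*)+mtgc(y^*).
\end{equation*}
So everything reduces to bounding the ``displacement term'' $|G_j|\cdot d(y,y^*)$ by $2\,mtgc(y^*)$.

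The key step, and the main obstacle, is turning the cost of the majority group $G_g$ under $y^*$ into an upper bound on this displacement. Here I would invoke two facts simultaneously: (i) MGDM picks $G_g$ as a largest group, so $|G_j|\leq |G_g|$; and (ii) since $y$ is the median of $G_g$, at least $\lceil |G_g|/2\rceil$ of the agents in $G_g$ lie on the side of $y$ opposite to $y^*$ (or at $y$ itself if $y^*=y$). Each such agent $i$ satisfies $d(y^*,x_i)\geq d(y,y^*)$, which gives $\sum_{i\in G_g}d(y^*,x_i)\geq \tfrac{|G_g|}{2}\cdot d(y,y^*)$. Rearranging yields $|G_g|\cdot d(y,y^*)\leq 2\sum_{i\in G_g}d(y^*,x_i)\leq 2\,mtgc(y^*)$, and combining with $|G_j|\leq |G_g|$ gives the desired bound on the displacement term.

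Putting the pieces together, for every group $G_j$ we obtain $\sum_{i\in G_j}d(y,x_i)\leq 2\,mtgc(y^*)+mtgc(y^*)=3\,mtgc(y^*)$, and taking the maximum over $j$ (the majority group $G_g$ trivially fits into this bound) yields $mtgc(y)\leq 3\,mtgc(y^*)$. The only subtlety to double-check is the parity case where $|G_g|$ is even and the median is defined via tie-breaking to the smallest index; here one still has at least half of the agents weakly on each side of $y$, so the displacement inequality survives. I expect the hardest part to be presenting the median/majority argument cleanly so that the factor of $2$ drops out exactly, because a looser bound would cost a larger constant than $3$.
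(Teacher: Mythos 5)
Your proof is correct and follows essentially the same route as the paper's: bound the change in any group's cost by $|G_j|\,d(y,y^*)$ via the triangle inequality, use $|G_j|\leq|G_g|$, and lower-bound $mtgc(y^*)$ by $\tfrac{1}{2}|G_g|\,d(y,y^*)$ using the fact that at least half of $G_g$ lies on the far side of its median from $y^*$. The only cosmetic difference is that you bound every group separately (noting $G_g$ is trivially fine by the median's $L_1$-optimality) while the paper applies the same bound only to the group attaining the maximum under $y$.
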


\begin{proof}
    Given any profile $r$, let $y$ be the output of MGDM, $y^*$ be the optimal location and without loss of generality we assume that $y < y^*$ and $mtgc(y,r)$ is achieved by $G_{g'}$. 
    Then from 
    \begin{align*}
    mtgc(y^*,r) &= \max_j \left\{\sum_{i\in G_j}d(y^*,x_i)\right\} 
    \geq \sum_{i\in G_{g'}}d(y^*,x_i),
    \end{align*}
    we have
    \begin{align*}
        &mtgc(y,r)-mtgc(y^*,r) 
        \leq  \sum_{i\in G_{g'}}c(y,x_i) - \sum_{i\in G_{g'}}c(y^*,x_i)\\
        &=\sum_{i\in G_{g'}}|y-x_i|-\sum_{i\in G_{g'}}|y^*-x_i| 
        \leq |G_{g'}|(y^*-y). 
    \end{align*}
    By a simple transformation, we further have 
    \begin{align*}
        mtgc(y,r)\leq & mtgc(y^*,r)+|G_g'|(y^*-y)\\
        \leq&  mtgc(y^*,r)+|G_g|(y^*-y). 
    \end{align*}
    Moreover, $mtgc(y^*,r)$ is at least $\frac{1}{2}|G_g|(y^*-y)$ because there are at least $\frac{|G_g|}{2}$ agents on the left side of $y$. Then we have the  approximation ratio
    \begin{align*}
        \rho &= \frac{mtgc(y,r)}{mtgc(y^*,r)}
        \leq \frac{mtgc(y^*,r)+|G_g|(y^*-y)}{mtgc(y^*,r)}\\
        &\leq \frac{\frac{1}{2}|G_g|(y^*-y)+|G_g|(y^*-y)}{\frac{1}{2}|G_g|(y^*-y)}
        = 3. 
    \end{align*}
\end{proof}

\begin{figure}
    \centering
    \includegraphics[width=0.5\linewidth]{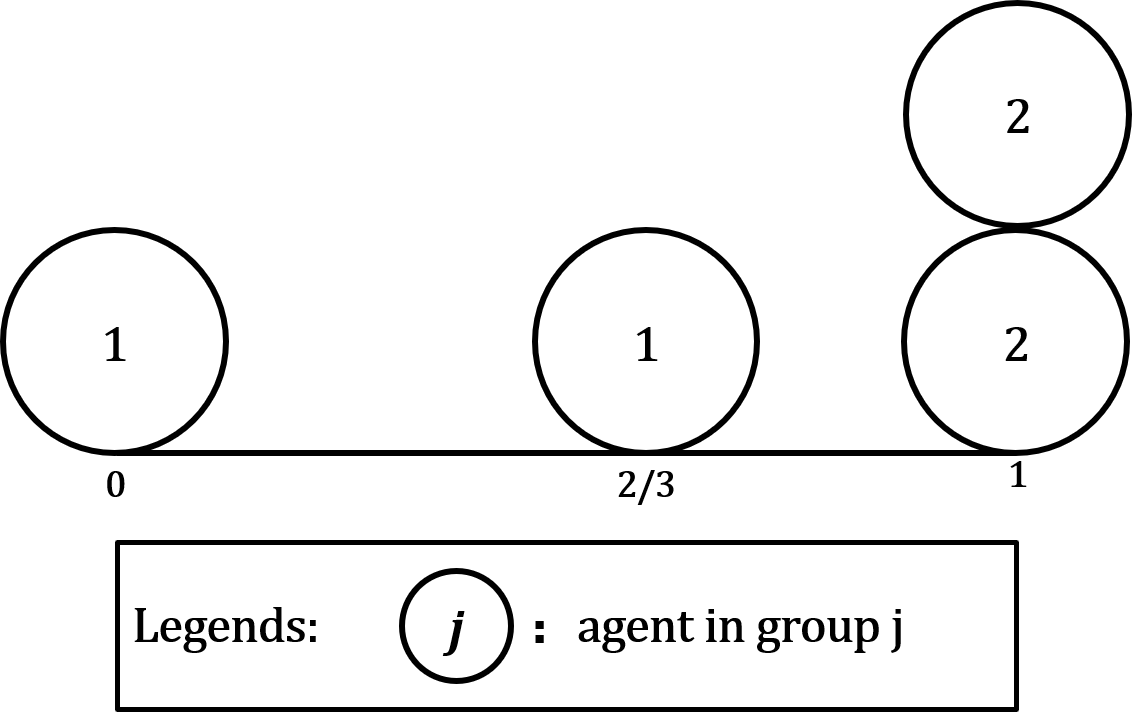}
    \caption{Profile used in Example~\ref{exam:mtgc_3}.}
    \label{fig:ub_mtgc_3}
\end{figure}

\vspace{-1em}

In fact, the following example shows that the bound given in Theorem~\ref{thm:GFDMub} is tight.

\begin{example}\label{exam:mtgc_3}
    Consider the case (see Figure~\ref{fig:ub_mtgc_3}) with one agent in group $G_1$ at 0, one agent in group $G_1$ at 2/3, and two agents in group $G_2$ at 1. MGDM puts the facility at $0$ achieving the $mtgc$ of $2$, but location $2/3$ can achieve the optimal $mtgc$ value $2/3$. Hence, the approximation ratio of MGDM for the $mtgc$ is at least $3$.
\end{example}

To complement our upper bound results, we provide a lower bound for this objective. 

\begin{theorem}\label{thm:MTGClb}
    Any deterministic strategyproof mechanism has an approximation ratio of at least $2$ for minimizing the $mtgc$.
\end{theorem}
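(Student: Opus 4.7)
The plan is to reduce to the classical two-point lower bound for the maximum cost objective established in~\cite{Procaccia:2009aa}. The key observation is that when every group is a singleton, $mtgc$ coincides with the maximum individual cost, so it suffices to instantiate the standard two-agent construction with each agent placed in her own singleton group. I would argue by contradiction: suppose some deterministic strategyproof mechanism $f$ achieves approximation ratio $\alpha < 2$ on the $mtgc$ objective.

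First, I would consider the profile $r$ with one $G_1$-agent at $0$ and one $G_2$-agent at $1$, for which the optimal $mtgc$ equals $1/2$, attained at the midpoint. Let $y = f(r)$; by the left-right symmetry of $r$ (the labels $G_1$ and $G_2$ are interchangeable), I may assume $y \in [0, 1/2]$, so $mtgc(y, r) = 1 - y$. If $y = 0$, then the ratio on $r$ already equals $2$, contradicting $\alpha < 2$ and finishing the argument.

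Otherwise, I would form a second profile $r'$ by moving the $G_2$-agent from $1$ to $y$, leaving group memberships fixed. Strategyproofness applied to this agent in $r'$---who can misreport her true location $y$ as $1$ and thereby obtain cost $|y - f(r)| = 0$---forces $|f(r') - y| \le 0$, so $f(r') = y$. But the optimum of $mtgc$ on $r'$ is $y/2$, attained at $y/2$, while $f(r')$ attains $mtgc$ value $y$, giving ratio exactly $2$ on $r'$ and contradicting $\alpha < 2$.

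I do not expect a serious obstacle: the argument is a direct mimic of the classical maximum-cost lower bound. The only delicate points are justifying the symmetry reduction to $y \in [0, 1/2]$, ensuring strategyproofness is applied in the correct direction (truthful cost at most misreport cost from the $G_2$-agent's perspective in $r'$), and handling the $y = 0$ boundary case, which already forces ratio $2$ on $r$ itself and needs no second profile.
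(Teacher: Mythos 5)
Your proof is correct and follows essentially the same construction as the paper's: the two-agent instance at $0$ and $1$, followed by moving one agent to the facility's location and invoking strategyproofness on that agent. The only cosmetic difference is that you use strategyproofness to pin $f(r')$ exactly at $y$ and then compute the ratio $2$ directly (plus the mirrored WLOG), whereas the paper argues that any location achieving ratio better than $2$ on $r'$ would invite a profitable deviation; both are valid.
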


\begin{proof}
    Assume that there exists a strategyproof mechanism $f$ where the approximation ratio is smaller than $2$. 
    Consider a profile where one agent in group $G_1$ is at $0$ and one agent in group $G_2$ is at $1$. 
    Without loss of generality, we assume that $f(r) = \frac{1}{2}+\epsilon$ and $\epsilon\geq 0$. 
    Now, consider the profile $r'$ where one agent in group $G_1$ is at $0$ and one agent in group $G_2$ is at $\frac{1}{2}+\epsilon$. 
    The optimal solution puts the facility at $1/4 + \epsilon/2$, which has a maximum total group cost of $1/4 + \epsilon/2$. 
    If the mechanism is to achieve an approximation ratio better than 2, the facility must be placed in $(0, 1/2+\epsilon)$. 
    In that case, given the profile $r'$, the agent in group $G_2$ can benefit by misreporting to 1, thus moving the facility to $1/2 + \epsilon$, in contradiction to strategyproofness.
\end{proof}

Next, we investigate whether the considered randomized mechanisms can help to improve the approximation ratios. 
Unfortunately, these mechanisms do not perform well for this objective. 
For completeness, we provide a lower bound for any strategyproof randomized mechanisms. 


\begin{proposition}\label{pro:RMub}
    RM and NRM have an approximation ratio of $\Omega(n)$ for minimizing the $mtgc$.
\end{proposition}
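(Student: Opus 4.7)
The plan is to exhibit a single counter-example that simultaneously witnesses an $\Omega(n)$ lower bound for both RM and NRM, exploiting the fact that on extreme profiles the two mechanisms can be made to coincide. Specifically, I would consider the profile with two groups: $G_1$ consisting of one agent at $0$, and $G_2$ consisting of $n-1$ agents all co-located at $1$.

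The first step is to observe that for this profile RM and NRM produce the same distribution. We have $x_1 = 0$ and $x_n = 1$, so RM places the facility at $0$, $1$, and $1/2$ with probabilities $1/4, 1/4, 1/2$. The median agent of $G_1$ is at $0$ and the median agent of $G_2$ is at $1$, so $x_{ml}=0$ and $x_{mr}=1$, and NRM produces the identical distribution. Hence it suffices to analyze the cost under this common distribution.

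The second step is to compute the $mtgc$ achieved. Using the fact that under randomization $mtgc$ is the maximum over groups of the expected sum of agent distances (which by linearity of expectation matches the individual expected-cost definition in the preliminaries), the expected total cost for $G_2$ is $\tfrac{1}{4}(n-1) + \tfrac{1}{4}(0) + \tfrac{1}{2}\cdot\tfrac{n-1}{2} = \tfrac{n-1}{2}$, while for $G_1$ it is $\tfrac{1}{4}(0) + \tfrac{1}{4}(1) + \tfrac{1}{2}\cdot\tfrac{1}{2} = \tfrac{1}{2}$. So the expected $mtgc$ delivered by both mechanisms is $\tfrac{n-1}{2}$. On the other hand, placing the facility deterministically at $1$ gives $mtgc = 1$, which upper bounds the optimum. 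Dividing yields an approximation ratio of at least $(n-1)/2 = \Omega(n)$.

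There is essentially no technical obstacle here; the only subtlety worth flagging is making explicit the interpretation of $mtgc$ under a randomized mechanism (max over groups of expected sum of agent distances), so that the comparison against the deterministic optimum is apples-to-apples. Once that convention is fixed, the construction above, where nearly all of the weight sits at a single point and the extremes used by RM/NRM do not align with that point, immediately forces the expected group cost to scale linearly in $n$ while the optimum stays constant.
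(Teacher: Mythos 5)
Your proof is correct and follows essentially the same strategy as the paper's: exhibit one explicit profile on which RM and NRM produce the identical distribution, with a large co-located group forcing an expected group cost of $\Omega(n)$ while a deterministic placement achieves constant $mtgc$. The only difference is cosmetic — you place the $n-1$ agents at the endpoint $1$, whereas the paper places $n-2$ agents at the midpoint $1/2$ — and both yield a ratio of $(n-1)/2$.
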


\begin{proof}
    Consider the case where one agent in $G_1$ is at 0, one agent in $G_2$ is at 1, and $n-2$ agents in $G_3$ are at $1/2$. Both RM and NRM put the facility at $0$ with probability $1/4$, $1$ with probability $1/4$, $1/2$ with probability $1/2$, achieving the $mtgc$ of $(n-1)/4$, but putting the facility at location $1/2$ can achieve the optimal $mtgc$ value $1/2$. Hence, the approximation ratio of RM and NRM for the $mtgc$ is at least $(n-1)/2$.
\end{proof}


\begin{theorem}\label{thm:MTGClb2}
    There does not exist any strategyproof randomized mechanism with an approximation ratio less than 3/2 for minimizing the $mtgc$.
\end{theorem}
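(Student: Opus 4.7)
The plan is to reduce this bound to the classical maximum cost lower bound of Procaccia and Tennenholtz, which the paper already cites as tight at $3/2$ for randomized mechanisms. The key observation is that when every group is a singleton (i.e., $|G_j|=1$ for all $j$), the $mtgc$ objective
\[
mtgc(y,r)=\max_j\Bigl\{\sum_{i\in G_j}c(y,x_i)\Bigr\}
\]
collapses exactly to the classical maximum (individual) cost $\max_i c(y,x_i)$. Any randomized mechanism that is strategyproof on all profiles of the group-fair FLP is in particular strategyproof on the subclass of profiles with only singleton groups, and its approximation ratio for $mtgc$ on this subclass lower-bounds its global approximation ratio. So it suffices to exhibit singleton-group instances on which no randomized SP mechanism can beat $3/2$ for max cost.

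First, I would set up the profile $r$ with one agent in $G_1$ at $0$ and one agent in $G_2$ at $1$, whose optimum is $y^*=1/2$ with value $1/2$. Let $Y=f(r)$ and write $a_t=\mathbb{E}[d(Y,t)]$. The triangle inequality gives $a_0+a_1\ge 1$, while $\rho<3/2$ forces $\max\{a_0,a_1\}<3/4$, hence both $a_0,a_1\in(1/4,3/4)$. Next I would consider a second profile $r'$ obtained by moving (only) the $G_2$ agent from $1$ to $1/2$, for which the optimal $mtgc$ is $1/4$ at $y^*=1/4$; analogously, $\rho<3/2$ forces every expected distance from $Y'=f(r')$ to either reported location to lie strictly below $3/8$. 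Then strategyproofness of the $G_2$ agent in both directions yields $\mathbb{E}[d(Y,1)]\le\mathbb{E}[d(Y',1)]$ and $\mathbb{E}[d(Y',1/2)]\le\mathbb{E}[d(Y,1/2)]$. Using the triangle inequality $\mathbb{E}[d(Y',1/2)]+\mathbb{E}[d(Y',1)]\ge 1/2$ together with the $3/8$-bounds on $Y'$ and the $3/4$-bound on $a_1$, I would chase these inequalities to a numerical contradiction — this is the same combinatorial pinching that appears in the original max-cost proof.

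The main obstacle is simply bookkeeping: the chain of strategyproofness constraints must line up correctly with the simultaneous approximation constraints on both $f(r)$ and $f(r')$. There is no conceptually new difficulty here, since by the reduction above the argument is a direct transcription of the Procaccia-Tennenholtz $3/2$ lower bound for max cost, with the added (trivial) observation that the profiles used involve only singleton groups so that $mtgc$ equals max cost on them. If one prefers, the theorem can be stated as a one-line corollary of that result together with this reduction, avoiding the detailed calculation altogether.
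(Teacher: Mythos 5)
Your high-level plan --- restrict to singleton groups so that $mtgc$ collapses to the classical maximum cost, then invoke the Procaccia--Tennenholtz randomized lower bound of $3/2$ --- is exactly the paper's approach, and the reduction itself is sound. The problem is in your attempted transcription of the P--T argument: the second profile you choose, and the inequality chain you sketch, do not produce a contradiction. Concretely, take $f(0,1)=\delta_{0.7}$ and $f(0,\tfrac12)=\delta_{0.3}$ (point masses). Then $a_0=0.7<\tfrac34$, $a_1=0.3<\tfrac34$; at $r'$ both expected distances are $0.3$ and $0.2$, below $\tfrac38$; the two strategyproofness constraints hold, since $\mathbb{E}[d(Y,1)]=0.3\le\mathbb{E}[d(Y',1)]=0.7$ and $\mathbb{E}[d(Y',\tfrac12)]=0.2\le\mathbb{E}[d(Y,\tfrac12)]=0.2$; and your triangle inequality $\mathbb{E}[d(Y',\tfrac12)]+\mathbb{E}[d(Y',1)]\ge\tfrac12$ is satisfied with room to spare. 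The ratios at the two profiles are $1.4$ and $1.2$. So every constraint you list is simultaneously satisfiable with ratio strictly below $3/2$, and no amount of "bookkeeping" will extract a contradiction from those two profiles and those deviations alone.

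The fix is that the P--T construction moves the deviating agent \emph{outward}, not inward. Keep $r=(0,1)$ with $Y=f(r)$; since $d(y,0)+d(y,1)\ge 1$ pointwise, assume w.l.o.g.\ $\mathbb{E}[d(Y,1)]\ge\tfrac12$. Now let $r'$ place the $G_2$ agent at $2$. Strategyproofness (true location $1$, misreport $2$) gives $\mathbb{E}[d(f(r'),1)]\ge\mathbb{E}[d(Y,1)]\ge\tfrac12$. Because $\max\{d(y,0),d(y,2)\}=1+d(y,1)$ for every $y$, the mechanism's objective at $r'$ is at least $1+\tfrac12=\tfrac32$, while the optimum at $r'$ is $1$. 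Note this step uses the expectation-of-the-maximum convention for evaluating a randomized mechanism's $mtgc$ (which is the convention the paper implicitly uses, e.g.\ in its $(n-1)/4$ computation); under the max-of-expectations convention your $\max\{a_0,a_1\}<\tfrac34$ reading is the right one, but then the two-agent lower bound needs more care, since uniform-over-extremes attains ratio $1$ on every two-agent profile under that reading. Your fallback of citing the P--T theorem as a black box together with the singleton-group reduction is acceptable and is essentially what the paper does, but the detailed argument as you wrote it does not go through.
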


\begin{proof}
    Consider a profile where one agent in $G_1$ is at 0 and one agent in $G_2$ is at 1. In this case, the maximum total group cost is equivalent to the maximum cost, and we can use a similar argument as the lower bound of the maximum cost in \cite{Procaccia:2009aa} to prove the theorem.
\end{proof}

\subsection{Maximum Average Group Cost}


In this subsection, we focus on minimizing the maximum average group cost, Table \ref{tab:magc_sum} summaries the results.
\begin{table}[htb!]
    \centering
    \begin{tabular}{c|c|c}
    \toprule
        Mechanism & Approximation Ratio & Lower Bound\\
    \midrule
        MDM & $3$ &\multirow{3}*{2} \\
        LDM & $\Omega(n)$ &\\
        MGDM & $3$  & \\
        \hline
        RM & $\Omega(n)$  &\multirow{2}*{1.5}\\
        NRM & $2$ & \\
    \bottomrule
    \end{tabular}
    \caption{Summary of Results for the $magc$.}
    \label{tab:magc_sum}
\end{table}

We first provide upper bounds for the  considered deterministic mechanisms. Surprisingly, both MDM and MGDM give an approximation ratio of 3.

\begin{theorem}\label{thm:MAGCub}
    MDM has an approximation ratio of $3$ for minimizing the $magc$.
\end{theorem}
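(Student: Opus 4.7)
The plan is to mimic the structure of the proof of Theorem~\ref{thm:GFDMub} (for MGDM under $mtgc$), but carry out both the upper bound on $magc(y,r) - magc(y^*,r)$ and the lower bound on $magc(y^*,r)$ in the \emph{average}-cost setting, which forces a pigeonhole step over groups that was absent in the total-cost case.

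Let $y$ be the output of MDM on a profile $r$, let $y^*$ be an optimal location, and assume WLOG $y < y^*$. Let $G_{g'}$ be a group that attains $magc(y,r)$. First I would prove the upper bound
\begin{align*}
magc(y,r) - magc(y^*,r) \le \frac{1}{|G_{g'}|}\left(\sum_{i\in G_{g'}}|y-x_i| - \sum_{i\in G_{g'}}|y^*-x_i|\right) \le y^* - y,
\end{align*}
where the first inequality uses that $magc(y^*,r) \ge \frac{1}{|G_{g'}|}\sum_{i\in G_{g'}}|y^*-x_i|$ and the second uses the triangle inequality applied termwise.

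The main obstacle is the matching lower bound $magc(y^*,r) \ge \tfrac{1}{2}(y^*-y)$, which is where the median property and a pigeonhole over groups enter. Because $y = x_{\lceil n/2\rceil}$, at least $\lceil n/2\rceil$ agents satisfy $x_i \le y$, hence $|y^* - x_i| \ge y^* - y$ for each of them. Letting $L_j := |\{i \in G_j : x_i \le y\}|$, we have $\sum_j L_j \ge n/2 = \tfrac{1}{2}\sum_j |G_j|$, so there must exist some group $G_k$ with $L_k / |G_k| \ge 1/2$ (otherwise summing $L_j < |G_j|/2$ over $j$ would contradict the previous bound). For that group,
\begin{align*}
magc(y^*,r) \ge \frac{1}{|G_k|}\sum_{i \in G_k}|y^*-x_i| \ge \frac{L_k (y^*-y)}{|G_k|} \ge \frac{y^*-y}{2}.
\end{align*}

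Combining the two bounds,
\begin{align*}
\frac{magc(y,r)}{magc(y^*,r)} \le \frac{magc(y^*,r) + (y^*-y)}{magc(y^*,r)} \le \frac{\tfrac{1}{2}(y^*-y) + (y^*-y)}{\tfrac{1}{2}(y^*-y)} = 3,
\end{align*}
which gives the claimed ratio. I would close with a brief remark that the symmetric case $y > y^*$ is identical, using agents at or to the right of the median.
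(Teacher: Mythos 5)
Your proof is correct and follows essentially the same route as the paper's: the same termwise triangle-inequality bound giving $magc(y,r)-magc(y^*,r)\le y^*-y$, and the same lower bound $magc(y^*,r)\ge\tfrac{1}{2}(y^*-y)$ via the existence of a group with at least half its members at or left of the median. The only difference is that you spell out the pigeonhole argument for that last fact, which the paper asserts in a single sentence.
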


\begin{proof}
    Given any profile $r$, let $y$ be the output of MDM, $y^*$ be the optimal location and without loss of generality we assume that $y < y^*$ and $magc(y,r)$ is achieved by $g'$. 
    Then by 
     \begin{align*}
     magc(y^*,r) &= \max_j\left\{\frac{\sum_{i\in G_j}d(y,x_i)}{|G_j|}\right\}  
     \geq \frac{\sum_{i\in G_{g'}}d(y,x_i)}{|G_{g'}|},
     \end{align*}
     we have
     \begin{align*}
        & magc(y,r)-magc(y^*,r) \\
        \leq& \sum_{i\in G_{g'}}c(y,x_i)/|G_{g'}| - \sum_{i\in G_{g'}}c(y^*,x_i)/|G_{g'}|\\
        =&\frac{\sum_{i\in G_{g'}}|y-x_i|}{|G_{g'}|}-\frac{\sum_{i\in G_{g'}}|y^*-x_i|}{|G_{g'}|}\\
        =&\frac{\sum_{i\in G_{g'}}(|y-x_i|-|y^*-x_i|)}{|G_{g'}|} 
        \leq y^*-y. 
    \end{align*}
    
     Therefore, $magc(y,r)\leq magc(y^*,r)+(y^*-y)$ and we also have $magc(y^*,r)\geq \frac{1}{2}(y^*-y)$ because there is at least one group with at least half group members on the left of $y$. Then we have the approximation ratio
    \begin{align*}
        \rho &= \frac{magc(y,r)}{magc(y^*,r)} 
        \leq \frac{magc(y^*,r)+(y^*-y)}{magc(y^*,r)}\\
        &\leq \frac{\frac{1}{2}(y^*-y)+(y^*-y)}{\frac{1}{2}(y^*-y)}
        = 3. 
    \end{align*}
\end{proof}

For LDM, we can reuse the proof of Proposition \ref{pro:LDMub} since all agents are in the same group in the proof, minimizing the $mtgc$ is equivalent to minimizing the $magc$. 

\begin{corollary}
    LDM has an approximation ratio of $\Omega(n)$ for minimizing the $magc$.
\end{corollary}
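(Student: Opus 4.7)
The plan is to reuse the single-group instance from Proposition~\ref{pro:LDMub} essentially verbatim and observe that when there is only one group, the $magc$ and $mtgc$ objectives differ only by a factor of $|G_1|$, so the approximation ratio analysis carries over up to that common scaling.

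First, I would instantiate the profile: place one agent at $0$ and $n-1$ agents at $1$, all belonging to the single group $G_1 = N$. Since $m = 1$, the outer $\max$ in the definition of $magc$ has a single term, and $magc(y,r) = \frac{1}{n}\sum_{i\in G_1} d(y, x_i)$. Then I would compute both values directly: LDM outputs $y = x_1 = 0$, yielding $magc(0, r) = (n-1)/n$, whereas placing the facility at $1$ achieves $magc(1, r) = 1/n$, which is optimal since $1$ is the location minimizing the sum of distances (it is the median of $n$ points when $n-1$ of them coincide at $1$).

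Dividing these two values gives a ratio of $n-1$, establishing the $\Omega(n)$ lower bound on LDM's approximation ratio for the $magc$ objective. No obstacle is expected: the argument is a direct re-use of the single-group instance, and the only observation required is that $magc$ and $mtgc$ coincide up to the factor $|G_1| = n$ when there is a single group, so the ratio between the mechanism's output and the optimum is preserved.
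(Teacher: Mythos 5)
Your proposal is correct and matches the paper's own argument: the paper likewise reuses the single-group instance from Proposition~\ref{pro:LDMub}, noting that with one group the $magc$ and $mtgc$ coincide up to the factor $n$, so the ratio of $n-1$ carries over. Your explicit computation of $(n-1)/n$ versus $1/n$ just spells out what the paper leaves implicit.
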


For MGDM, we can also use a similar argument as in the proof of Theorem~\ref{pro:MDMub} to show that MGDM can achieve an upper bound of $3$ since for any profile $r$, there are at least half of the members who are in the largest group and on the left-hand side of the output of MGDM.

\begin{corollary}
    MGDM has an approximation ratio of $3$ for minimizing the $magc$.
\end{corollary}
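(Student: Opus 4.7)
The plan is to mimic the argument that was just used for MDM in Theorem~\ref{thm:MAGCub}, but to source the crucial ``half of the agents on one side'' fact from the median structure of the \emph{largest} group rather than from the overall median. First I would fix an arbitrary profile $r$, let $y$ be the MGDM output (i.e., the median of the largest group $G_g$), let $y^*$ be an optimal location, and assume without loss of generality that $y<y^*$. I would then let $G_{g'}$ be the group achieving $magc(y,r)$.

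Next I would bound the additive gap. Starting from
\[
magc(y^*,r)\ \geq\ \frac{\sum_{i\in G_{g'}}|y^*-x_i|}{|G_{g'}|},
\]
the triangle inequality applied term by term gives $|y-x_i|-|y^*-x_i|\leq y^*-y$ for each $i$, so dividing by $|G_{g'}|$ yields $magc(y,r)-magc(y^*,r)\leq y^*-y$, exactly as in Theorem~\ref{thm:MAGCub}.

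The main step (and the only place that differs substantively from the MDM proof) is producing the denominator $magc(y^*,r)\ \geq\ \tfrac{1}{2}(y^*-y)$. Here I would use that $y$ is the median of $G_g$, so at least $\lceil|G_g|/2\rceil$ members of $G_g$ lie at or to the left of $y<y^*$. Each such agent $i$ satisfies $|y^*-x_i|\geq y^*-y$, so
\[
\frac{\sum_{i\in G_g}|y^*-x_i|}{|G_g|}\ \geq\ \frac{1}{2}(y^*-y),
\]
and since $magc(y^*,r)$ is the maximum average over all groups it is at least this average for $G_g$. Combining the two inequalities as in Theorem~\ref{thm:MAGCub} yields the ratio $(\tfrac{1}{2}(y^*-y)+(y^*-y))/\tfrac{1}{2}(y^*-y)=3$.

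The only conceptual obstacle I anticipate is making sure the ``at least half'' claim is applied to the \emph{right} group: the group $G_{g'}$ attaining $magc(y,r)$ need not have its median near $y$, but the lower bound on $magc(y^*,r)$ only requires \emph{some} group to certify a large average cost at $y^*$, and the largest group $G_g$ (whose median is exactly $y$) does the job. Tightness of the bound follows immediately from Example~\ref{exam:mtgc_3}, since in that instance $mtgc$ and $magc$ coincide up to the group-size denominators and a direct check gives ratio $3$.
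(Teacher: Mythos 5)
Your proposal is correct and follows essentially the same route as the paper: the paper's (terse) justification is precisely to reuse the MDM argument from Theorem~\ref{thm:MAGCub}, replacing the overall-median fact with the observation that at least half of the largest group $G_g$ lies on the left of MGDM's output, which certifies $magc(y^*,r)\geq\tfrac{1}{2}(y^*-y)$. Your write-up just makes explicit the steps the paper leaves implicit, and your tightness remark via Example~\ref{exam:mtgc_3} checks out.
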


Next, we investigate the lower bound. We can reuse the proof of Theorem \ref{thm:MTGClb} to show the same lower bound since there is only one agent in each group in the proof, minimizing the maximum average group cost is equivalent to minimizing the maximum total group cost.

\begin{corollary}
\label{thm:MAGClb}
Any deterministic strategyproof mechanism has an approximation 
ratio of at least $2$ for minimizing the $magc$.
\end{corollary}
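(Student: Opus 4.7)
The plan is to piggyback directly on the construction already used for the $mtgc$ lower bound in Theorem~\ref{thm:MTGClb}. The key observation, as hinted in the paragraph preceding the corollary, is that the two profiles considered in that proof only ever contain a single agent per group. When every group is a singleton, the denominator $|G_j|$ equals $1$ for every $j$, so the two objectives coincide pointwise:
\begin{equation*}
magc(y,r)=\max_{j}\frac{\sum_{i\in G_j}c(y,x_i)}{|G_j|}=\max_{j}\sum_{i\in G_j}c(y,x_i)=mtgc(y,r).
\end{equation*}
Hence any mechanism and any candidate facility location achieves identical values on the two objectives for these particular profiles, and the ratio of mechanism cost to optimal cost is also identical.

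Given this, the proof reduces to the following short argument. Suppose for contradiction that a deterministic strategyproof mechanism $f$ achieves an approximation ratio strictly less than $2$ for $magc$. Apply $f$ to the profile $r$ with one agent in $G_1$ at $0$ and one agent in $G_2$ at $1$; by symmetry assume $f(r)=\tfrac12+\epsilon$ with $\epsilon\ge 0$. Now consider the profile $r'$ with one agent in $G_1$ at $0$ and one agent in $G_2$ at $\tfrac12+\epsilon$. Since both groups are singletons, $magc$ equals $mtgc$ on $r'$, so the optimal value is $\tfrac14+\tfrac{\epsilon}{2}$, attained at $\tfrac14+\tfrac{\epsilon}{2}$. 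For $f$ to beat the ratio $2$ on $r'$, the output $f(r')$ must lie strictly inside $(0,\tfrac12+\epsilon)$; otherwise its $magc$ value is at least $\tfrac12+\epsilon$, giving ratio at least $2$. But then the agent in $G_2$ at $\tfrac12+\epsilon$ can misreport her location as $1$, moving the facility back to $f(r)=\tfrac12+\epsilon$, which strictly decreases her distance (since $f(r')$ lies strictly to the left of her true location $\tfrac12+\epsilon$). This contradicts the strategyproofness of $f$.

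I do not expect any real obstacle here: the corollary is essentially a translation of Theorem~\ref{thm:MTGClb}, and the only thing to verify carefully is that the profiles used there indeed have singleton groups (which they do), so that $mtgc$ and $magc$ coincide. The only minor subtlety worth stating explicitly is the symmetry step allowing us to assume $f(r)\ge\tfrac12$, and the justification that $f(r')\in(0,\tfrac12+\epsilon)$ is necessary to beat ratio $2$ — both of which match the corresponding steps in the proof of Theorem~\ref{thm:MTGClb} verbatim.
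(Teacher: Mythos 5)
Your proposal is correct and follows exactly the paper's route: the paper also proves this corollary by observing that the profiles in the proof of Theorem~\ref{thm:MTGClb} have singleton groups, so $magc$ and $mtgc$ coincide and the same strategyproofness contradiction applies. Your write-up merely spells out the verification in more detail.
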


We now investigate whether the considered randomized mechanisms can help to improve the approximation ratios. While the existing randomized mechanism does not, our mechanism, NRM, improves the approximation ratio to 2. 


\begin{proposition}\label{pro:RMmagc}
    RM has an approximation ratio of $\Omega(n)$ for minimizing the $magc$.    
\end{proposition}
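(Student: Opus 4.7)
The plan is to exhibit an adversarial profile on which RM's $magc$ is a constant while the optimum's $magc$ is $\Theta(1/n)$. The key observation driving the construction is that RM commits probability $1/4$ to each of the extreme points $x_1$ and $x_n$, so once the span $x_n - x_1$ is fixed, every agent lying in the interval incurs expected cost at least $(x_n - x_1)/4$ under RM. In contrast, the optimum is free to place the facility right at a concentration of agents, and because $magc$ normalizes by group size, a single outlier embedded in a large group contributes only $1/|G_j|$ to that group's average. My construction exploits precisely this asymmetry.

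Concretely, I would take a two-group instance with $n-2$ agents of $G_1$ at position $0$, one agent of $G_1$ at position $1$, and one agent of $G_2$ at position $0$. Then $x_1 = 0$ and $x_n = 1$, and a short calculation shows that under RM every agent has expected distance exactly $1/2$ to the (random) facility; hence both group averages equal $1/2$ and $magc(\text{RM}, r) = 1/2$. For the optimum, placing the facility at $y^* = 0$ leaves only the lone outlier at $1$ contributing to $G_1$'s cost and nothing to $G_2$, yielding $magc(y^*, r) \leq 1/(n-1)$. Dividing gives a ratio of at least $(n-1)/2$, which is $\Omega(n)$.

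The main subtlety, and what distinguishes this construction from the analogous $mtgc$ proof in Proposition \ref{pro:RMub}, is the normalization by group size. If the outlier at $1$ were placed in its own singleton group, the optimum's $magc$ would remain bounded below by a constant because that group would directly contribute $|1 - y^*|$, and the ratio would collapse to $O(1)$. The fix, which is the core idea I would emphasize, is to embed the outlier inside a large group so that its cost is amortized over many members: the optimum then drives $magc$ down to $O(1/n)$ while RM, oblivious to group structure, still spreads probability mass across the full interval and pays a constant per-agent cost.
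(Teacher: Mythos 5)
Your proof is correct and takes essentially the same approach as the paper: exhibit a single profile in which a large group is concentrated at one point (so the optimum's $magc$ drops to $O(1/n)$ by averaging) while RM's fixed probability mass on the endpoints forces a constant expected cost, giving a ratio of $\Omega(n)$. The paper's instance concentrates the agents at the midpoint $1/2$ within one group (ratio $(n+2)/4$) whereas yours concentrates them at the endpoint $0$ (ratio $(n-1)/2$), but the calculations and the underlying idea are the same.
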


\begin{proof}
    Consider the case with one agent at 0, one agent at 1, and $n-2$ agents at $1/2$ where all of them are in the same group. RM puts the facility at $0$ with probability $1/4$, $1$ with probability $1/4$, $1/2$ with probability $1/2$, achieving the $magc$ of $(n+2)/(4n)$, but location $1/2$ can achieve the optimal $mtgc$ value $1/n$. Hence, the approximation ratio of NRM for the $mtgc$ is at least $(n+2)/4$.
\end{proof}

\begin{theorem}\label{thm:MAGCub2}
    NRM has an approximation ratio of $2$ for minimizing the $magc$.
\end{theorem}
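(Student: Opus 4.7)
The plan is to bound, for each group $G_j$ individually, its expected average cost $C_j$ under NRM by $2\cdot magc(y^*,r)$, where $y^*$ is an optimal location. Two facts drive the argument: as noted just before the theorem, $y^*\in[x_{ml},x_{mr}]$; and the per-group function $\phi_j(y):=\tfrac{1}{|G_j|}\sum_{i\in G_j}|y-x_i|$ is $1$-Lipschitz.

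First I will write
\[
C_j \;=\; \tfrac{1}{4}\phi_j(x_{ml}) + \tfrac{1}{4}\phi_j(x_{mr}) + \tfrac{1}{2}\phi_j\!\bigl(\tfrac{x_{ml}+x_{mr}}{2}\bigr),
\]
set $a:=y^*-x_{ml}\geq 0$, $b:=x_{mr}-y^*\geq 0$, $L:=a+b$, and note $\bigl|\tfrac{x_{ml}+x_{mr}}{2}-y^*\bigr|=|b-a|/2$. Applying $\phi_j(y)\leq \phi_j(y^*)+|y-y^*|$ term by term and using the identity $\tfrac{a+b}{2}+\tfrac{|b-a|}{2}=\max(a,b)$ yields
\[
C_j \;\leq\; \phi_j(y^*) + \tfrac{L}{4} + \tfrac{|b-a|}{4} \;=\; \phi_j(y^*) + \tfrac{\max(a,b)}{2}.
\]

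Next I will show $magc(y^*,r)\geq \max(a,b)/2$. Let $G_L$ and $G_R$ denote the groups whose left-medians equal $x_{ml}$ and $x_{mr}$, respectively. By the defining property of the left-median, at least half of $G_L$'s agents sit at or left of $x_{ml}$, each therefore at distance at least $a$ from $y^*$, giving $\phi_{G_L}(y^*)\geq a/2$; symmetrically $\phi_{G_R}(y^*)\geq b/2$. Since $\phi_j(y^*)\leq magc(y^*,r)$ for every $j$, chaining these inequalities gives $C_j \leq 2\,magc(y^*,r)$, and taking the maximum over $j$ finishes the proof.

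I expect the main obstacle to be the lower bound $magc(y^*,r)\geq \max(a,b)/2$: the cruder estimate $magc(y^*,r)\geq L/4$ only yields ratio $3$, and what rescues the ratio $2$ is precisely the fact that NRM's endpoints are group medians (rather than the extreme agent locations $x_1,x_n$ used by RM), so that the groups $G_L$ and $G_R$ separately impose the $a/2$ and $b/2$ constraints at any candidate $y^*\in[x_{ml},x_{mr}]$.
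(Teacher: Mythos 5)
Your proof is correct and follows essentially the same route as the paper's: bound the cost at each of the three support points by $magc(y^*,r)$ plus the distance to $y^*$ (so the weighted excess telescopes to $\max(a,b)/2$), then lower-bound $magc(y^*,r)$ by $\max(a,b)/2$ using the fact that at least half of the group defining $x_{ml}$ (resp.\ $x_{mr}$) lies at or beyond that median. Your per-group bounding and the identity $\tfrac{a+b}{2}+\tfrac{|b-a|}{2}=\max(a,b)$ merely replace the paper's WLOG assumption that $y^*$ lies in the left half of $[x_{ml},x_{mr}]$; the substance is the same.
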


\begin{proof}
Given any profile $r$, if $x_{ml}=x_{mr}$, NRM only puts the facility at $y=x_{ml}=x_{mr}$, which is the optimal location. Therefore, we only consider the case where $x_{ml}\neq x_{mr}$. Without loss of generality, we assume that $x_{ml} < x_{mr}$ and let $y^*$ be the optimal solution.
    
Without loss of generality, we assume that $y^*\in[x_{ml}, \frac{x_{ml}+x_{mr}}{2}]$. If $magc(x_{ml},r)$ is achieved by $G_p$ and since there are at most $|G_p|$ members of group $G_p$ on the right of $y^*$, we have
\begin{align*}
    magc(x_{ml},r) 
    \leq& \frac{\sum_{i\in G_p}c(x_i,y^*)}{|G_p|} + (y^*-x_{ml})\\
    \leq& magc(y^*,r) + (y^* - x_{ml}). 
\end{align*}
Similarly, if $magc(\frac{x_{ml}+x_{mr}}{2},r)$ is achieved by $G_q$ and $magc(x_{mr}, r)$ is achieved by $G_s$, we have
\begin{align*}
    magc(\frac{x_{ml}\!+\!x_{mr}}{2},r) 
    &\!\leq\! \frac{\sum_{i\in G_q}c(\!x_i,y^*\!)}{|G_q|}\! +\! (\frac{x_{ml}\!+\!x_{mr}}{2}\! -\! y^*\!)\\
    &\leq magc(y^*,r)\! +\! (\frac{x_{ml}+x_{mr}}{2}\!-\!y^*), 
\end{align*}
and
\begin{align*}
    magc(x_{mr},r) 
    \leq& \frac{\sum_{i\in G_s}c(x_i,y^*)}{|G_s|} + (x_{mr} - y^*)\\
    \leq& magc(y^*,r) + (x_{mr} - y^*). 
\end{align*}

Therefore, the approximation ratio is 
\begin{align*}
    \rho = & \frac{\frac{1}{4}magc(x_{ml}, r)+\frac{1}{2}magc(\frac{x_{ml}+x_{mr}}{2}, r)+\frac{1}{4}magc(x_{mr}, r)}{magc(y^*, r)}\\
    \leq & \frac{\frac{1}{4}(magc(y^*, r)\!+\!(y^*\!-\!x_{ml}))\!}{magc(y^*,r)}\\
    &+\frac{\!\frac{1}{2}(magc(y^*, r)\!+\!(\frac{x_{ml}+x_{mr}}{2}\!-\!y^*))}{magc(y^*,r)}\\
    &+\frac{\frac{1}{4}(magc(y^*,r)+(x_{mr}-y^*))}{magc(y^*,r)}
    = 1+\frac{\frac{1}{2}(x_{mr}-y^*)}{magc(y^*,r)}. 
\end{align*}
Furthermore, if the agent $mr$ is in group $G_R$, then
there are at least $\frac{|G_R|}{2}$ agents of group $G_R$ in $[x_{mr},\infty)$ since $mr$ is median agent. Therefore, we have
\begin{align*}
    magc(y^*,r) &\geq 
    \frac{\sum_{i\in G_R,x_i>x_{mr}}(x_{mr}-y^*)}{|G_R|} 
    > \frac{x_{mr}-y^*}{2}. 
\end{align*}
Thus, the approximation ratio is 
\begin{align*}
    \rho &= 1+\frac{\frac{1}{2}(x_{mr}-y^*)}{magc(y^*,r)} 
         \leq  1+\frac{\frac{1}{2}(x_{mr}-y^*)}{\frac{1}{2}(x_{mr}-y^*)}
         = 2.
\end{align*}
\end{proof}

\begin{figure}
    \centering
    \includegraphics[width=0.5\linewidth]{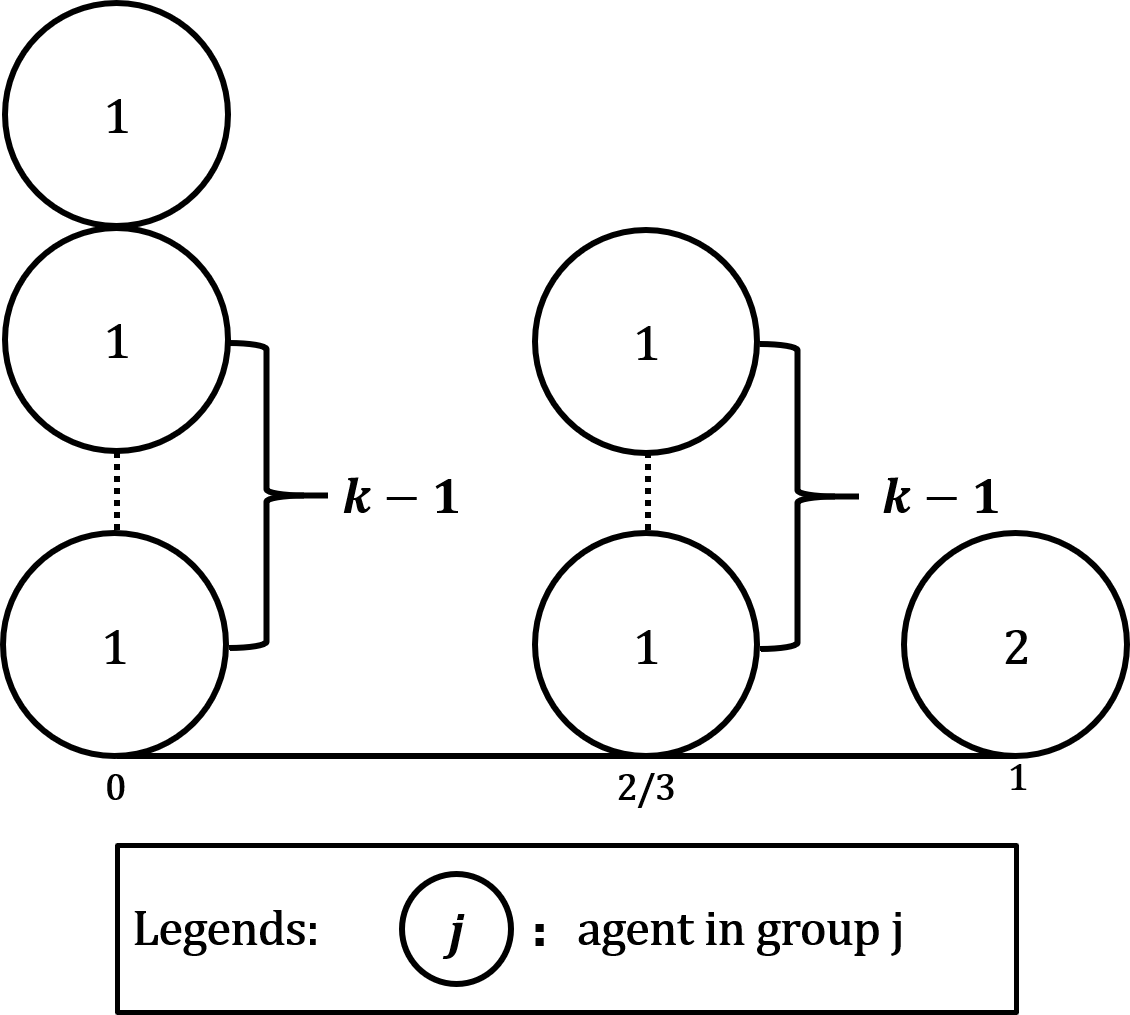}
    \caption{Profile used in Example~\ref{exam:magc_32}.}
    \label{fig:ub_magc_3}
\end{figure}

The following example shows that the bounds given in Theorem~\ref{thm:MAGCub} and Theorem~\ref{thm:MAGCub2} are tight.

\begin{example}\label{exam:magc_32}
    Consider the case (see Figure~\ref{fig:ub_magc_3}) where $k$ agents in group $G_1$ are at 0, $k-1$ agents in group $G_1$ are at $\frac{2}{3}$, and one agent in group $G_2$ is at 1. MDM puts the facility at $0$ achieving the $magc$ of $1$, NRM puts the facility at 0 with probability 1/4, 1/2 with probability 1/2, 1 with probability 1/4, achieving the $magc$ of $2/3$, but the optimal solution puts the facility at $2/3$ achieving the $magc$ of $1/3$.
\end{example}

Next, we provide a lower bound to complement our upper bound results. 
Note that we can reuse the argument as in the proof of Theorem \ref{thm:MTGClb2}, since each group has only one agent in the proof, minimizing the maximum cost is equivalent to minimizing the $magc$. 

\begin{corollary}\label{RandLower}
    There does not exist any strategyproof randomized mechanism with an approximation ratio less than 3/2 for minimizing the $magc$.
\end{corollary}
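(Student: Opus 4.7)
The plan is to exploit the explicit hint in the surrounding text: whenever every group contains exactly one agent, the average cost of each group coincides with the cost of its unique agent, so $magc(y,r)$ collapses to the maximum individual cost objective $\max_{i} d(y,x_i)$ studied in Procaccia and Tennenholtz. Thus it suffices to exhibit a single family of profiles in which each group is a singleton and then invoke the known $3/2$ lower bound for the max-cost objective.

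Concretely, I would first construct the profile $r$ with one agent in group $G_1$ located at $0$ and one agent in group $G_2$ located at $1$. On such profiles we have
\begin{align*}
magc(y,r) \;=\; \max\bigl\{\, d(y,0),\, d(y,1)\,\bigr\},
\end{align*}
which is exactly the maximum cost. Any strategyproof randomized mechanism $f$ for $magc$ induces a strategyproof randomized mechanism for the classical max-cost FLP on this two-agent subfamily, where the optimum is $y^*=1/2$ with value $1/2$.

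Next I would replay the Procaccia--Tennenholtz lower bound argument. The standard step is to let $Y=f(r)$ be the (distribution of the) output and consider the two agents' expected costs $\mathbb{E}[d(Y,0)]$ and $\mathbb{E}[d(Y,1)]$. By strategyproofness, neither agent can profit from misreporting; combining this with perturbed profiles (e.g. moving the right agent to a point that makes the mechanism's output shift) forces $\max\{\mathbb{E}[d(Y,0)],\mathbb{E}[d(Y,1)]\} \geq 3/4$, while the optimum value is $1/2$. Dividing yields an approximation ratio of at least $3/2$. Since this whole chain of inequalities happens inside the singleton-group subfamily where $magc$ equals max-cost, the same bound transfers verbatim to $magc$.

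The only real subtlety is making sure that the reduction preserves strategyproofness in the group-fair model: agents here also report a group label $g_i$, but the lower-bound construction fixes the group memberships ($G_1=\{1\}, G_2=\{2\}$) and only varies the reported locations, so the additional group-reporting dimension is irrelevant and the standard argument applies unchanged. Hence I would conclude by stating that no strategyproof randomized mechanism can approximate $magc$ within a factor smaller than $3/2$, matching the upper bound of $2$ from Theorem \ref{thm:MAGCub2} up to a small gap.
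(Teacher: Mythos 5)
Your proposal is correct and is essentially identical to the paper's proof: the paper also reduces to the singleton-group profile with agents at $0$ and $1$, observes that $magc$ then coincides with the maximum-cost objective, and invokes the Procaccia--Tennenholtz $3/2$ randomized lower bound (the paper states this in one line by reusing Theorem~\ref{thm:MTGClb2}). One minor nit: the Procaccia--Tennenholtz argument does not conclude by forcing $\max\{\mathbb{E}[d(Y,0)],\mathbb{E}[d(Y,1)]\}\geq 3/4$ on the original profile, but rather derives a strategyproofness contradiction on a stretched profile (e.g.\ moving the right agent to $2$, where the optimum is $1$); since that stretched profile still has singleton groups, your reduction carries through unchanged.
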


\section{Intergroup and Intragroup Fairness}

In this section, we investigate \textit{Intergroup and Intragroup Fairness} (IIF), which not only captures 
fairness between groups but also fairness within groups. 
IIF is an important characteristic to be considered in the social science domain 
when studying fairness in group dynamics (see e.g., \cite{haslam2014social}). 

To facilitate our discussion, given group $g$, profile $r$ and facility location $y$,
let $avgc(r, g, y)$ be the average cost of agents in $g$, 
$maxc(r,g,y)$ be the maximum cost among agents in $g$, 
and $minc(r,g,y)$ be the minimum cost among agents in $g$. 
We define below new group-fair IIF social objectives which measure both intergroup and intragroup fairness: 
\begin{align*}
    IIF_{1}(y, r) =& \max_{1\leq j \leq m}\{avgc(r,G_j,y)\} \\
    &+ \max_{1\leq j \leq m} \{maxc(r,G_j,y)-minc(r,G_j,y)\}
\end{align*}
\begin{align*}
    IIF_{2}(y, r)  = \max_{1\leq j \leq m}\{ & avgc(r,G_j,y) + maxc(r,G_j,y)\\
    & - minc(r,G_j,y)\}. 
\end{align*}

Using $maxc(r,G_j,y)-minc(r,G_j,y)$ to measure intragroup fairness is well justified since this is the max-envy considered for one group in \cite{Cai:2016aa}. For $IIF_{1}$, the intergroup fairness and the intragroup fairness are two separate indicators and they can be achieved by different groups, while for $IIF_{2}$, we combine these two as one indicator of each group. 

The reason we do not use the total group cost in this combined measure is that when the group size is large, the total cost is large but the maximum cost minus minimum cost is just the cost of one agent. Then the total cost will play a major role and intragroup fairness will be diluted, which goes against the purpose of the combined fairness measure. Moreover, since both the values of the average group cost and the max-envy are in $[0,1]$, we combine them directly without normalization.

Given the objectives, our goal is to minimize $IIF_{1}$ or $IIF_{2}$. 


\begin{namedthm*}{kth-Location Deterministic Mechanism}[k-LDM]
    Given $x_1\leq x_2 \leq x_3 ... \leq x_n$, put the facility at $y=x_k$ ($k=1,2,...,n$).
\end{namedthm*}

$k$-LDM can be seen as a class of mechanisms and LDM is one of them ($k=1$). It is well known that putting the facility at the $k$-th agent's location is strategyproof. Thus we focus on the approximation ratios. 

\begin{theorem}\label{thm:IIF1ub}
    For any $1\leq k \leq n$, k-LDM has an approximation ratio of 4 for minimizing $IIF_{1}$ and $IIF_2$.
\end{theorem}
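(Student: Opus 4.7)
The plan is to prove both ratios by the same recipe: three Lipschitz-type estimates comparing the values at $x_{k}$ and at an optimal location $y^{*}$, coupled with one structural inequality that converts the offset $|x_{k}-y^{*}|$ into a constant multiple of the optimum. The whole argument will come out as $1+3=4$.

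I would first record the following elementary per-group estimates, valid for any two locations $y,y^{*}$ and any group $G_{j}$:
\begin{align*}
 avgc(r,G_{j},y) &\le avgc(r,G_{j},y^{*})+|y-y^{*}|,\\
 |maxc(r,G_{j},y)-maxc(r,G_{j},y^{*})| &\le |y-y^{*}|,\\
 |minc(r,G_{j},y)-minc(r,G_{j},y^{*})| &\le |y-y^{*}|.
\end{align*}
The first follows by averaging $|y-x_{i}|\le|y^{*}-x_{i}|+|y-y^{*}|$ over $i\in G_{j}$; the other two by applying the triangle inequality to the agents attaining the respective extrema. Combining the last two gives the envy estimate
\[
 maxc(r,G_{j},y)-minc(r,G_{j},y) \le maxc(r,G_{j},y^{*})-minc(r,G_{j},y^{*})+2|y-y^{*}|.
\]

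Next I would prove the structural lemma $avgc(r,G_{j},y)\ge minc(r,G_{j},y)$ (the average of nonnegative values is at least their minimum), equivalently $avgc+(maxc-minc)\ge maxc$. Applied to the group $G_{g_{k}}$ containing the agent at $x_{k}$, together with the obvious $maxc(r,G_{g_{k}},y^{*})\ge|y^{*}-x_{k}|$, this gives
\[
 |y^{*}-x_{k}| \le avgc(r,G_{g_{k}},y^{*})+\bigl(maxc(r,G_{g_{k}},y^{*})-minc(r,G_{g_{k}},y^{*})\bigr).
\]
The right-hand side is at most $IIF_{2}(y^{*},r)$ by definition of its max, and also at most $IIF_{1}(y^{*},r)$ since splitting the two maxima in $IIF_{1}$ only enlarges the sum.

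Finally, I would combine: applying the $avgc$ estimate and the envy estimate to every group and taking the maxes required by each objective yields, with $y=x_{k}$,
\[
 IIF_{\ell}(x_{k},r) \le IIF_{\ell}(y^{*},r) + 3|x_{k}-y^{*}|, \qquad \ell\in\{1,2\},
\]
since in both cases one factor of $|x_{k}-y^{*}|$ comes from the $avgc$ bound and $2|x_{k}-y^{*}|$ from the envy bound (for $IIF_{2}$ they are summed inside a single $\max$; for $IIF_{1}$ the two maxes split and then recombine). Substituting $|x_{k}-y^{*}|\le IIF_{\ell}(y^{*},r)$ from the previous step yields the ratio $1+3=4$. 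The main obstacle will be isolating the structural inequality $avgc\ge minc$ and recognizing that it is precisely what is needed to bound $|x_{k}-y^{*}|$ by the optimum; without such a direct link, the naive triangle-inequality approach can only control $|x_{k}-y^{*}|$ by the diameter of the agent set, which in general is not bounded in terms of $IIF_{\ell}(y^{*},r)$.
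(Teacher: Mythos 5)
Your proof is correct, but it follows a genuinely different route from the paper's. The paper first drops $-minc$ to reduce to bounding $\max_j avgc(r,G_j,y)+\max_j maxc(r,G_j,y)$, then performs a profile modification: for each group it pushes every agent out to distance $maxc(r,G_j,y^*)$ from $y^*$, argues the optimum does not increase (using, in effect, the same $avgc\ge minc$ observation you isolate) while the mechanism's value does not decrease, and finally computes the ratio on the resulting extremal profile, where $maxc(r',G_p,y^*)\ge y^*-y$ plays the role of your displacement bound. You instead prove the clean additive inequality
\[
IIF_{\ell}(x_k,r)\le IIF_{\ell}(y^*,r)+3\,|x_k-y^*|,\qquad \ell\in\{1,2\},
\]
via per-group Lipschitz estimates (the $1$-Lipschitz bounds for $avgc$, $maxc$, $minc$, hence $2$-Lipschitz for the envy term), and then close the loop with $|x_k-y^*|\le maxc(r,G_{g_k},y^*)\le avgc+maxc-minc\le IIF_2(y^*,r)\le IIF_1(y^*,r)$, which uses exactly the structural fact $avgc\ge minc$ together with $x_k$ being an agent's location. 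The two arguments rest on the same two essential ingredients, but yours avoids the profile-modification machinery entirely and handles $IIF_1$ and $IIF_2$ uniformly; it is arguably the more transparent of the two. (The only degenerate case worth a sentence is $IIF_\ell(y^*,r)=0$, which forces all agents to coincide with $y^*$ so that $x_k=y^*$ and the ratio is trivially attained.)
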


\begin{proof}
    First, we focus on $IIF_1$.
    Let $y^*$ be the optimal location and we assume that $y<y^*$ without loss of generality. Then we prove that the approximation ratio
    \begin{align*}
        \rho 
        \leq& \frac{\! \max_{j}\{avgc(r,G_j,y)\} +\! \max_{j} \{maxc(r,G_j,y)\}}{IIF_1(y^*,r)}
        \leq 4. 
    \end{align*}
As Figure~\ref{fig:ub_IIF_move} shows, given any profile $r$ and for each group $j$, we move all the agents to $y^*-maxc(r,G_j,y^*)$ if they are on the left of $y$, and to $y^*+maxc(r,G_j,y^*)$ if they are on the right of $y$. 
Then we will obtain a new profile $r'$ with the approximation ratio $\rho'$. If we prove these movements do not make the optimal solution increase and do not make the mechanism solution decrease, namely $\rho\leq \rho'$, and further show that $\rho'\leq 4$, then we can obtain the approximation ratio of $4$.
    
After the movements, $maxc(r',G_j,y^*) - minc(r',G_j,y^*) = 0$ 
and $avgc(r',G_j,y^*)=maxc(r,G_j,y^*)$ for all group $G_j$. Without loss of generality, suppose that $\max_{j}\{avgc(r',G_j,y^*)\}$ is achieved by group $G_p$. 
Then we have 
    \begin{align*}
        & \! \max_{j}\{avgc(r,G_j,y^*)\} \\
         &+\! \max_{j} \{maxc(r,G_j,y^*)-minc(r,G_j,y^*)\}\\
        \geq& avgc(r,G_p,y^*)+maxc(r,G_p,y^*)-minc(r,G_p,y^*) \\
        \geq& minc(r,G_p,y^*)+maxc(r,G_p,y^*)-minc(r,G_p,y^*)\\
        =& maxc(r,G_p,y^*) = avgc(r',G_p,y^*), 
    \end{align*}
    which implies the optimal solution does not increase after the movements since $maxc(r',G_j,y^*) - minc(r',G_j,y^*) = 0$. 
    Moreover, because $y$ is an agent location and there exists at least one agent on the left of or at $y$, we have $maxc(r',p,y^*)\geq y^*-y$.
    
    For the mechanism solution, neither of $\max_{j}\{avgc(r,G_j,y)\}$ and $\max_{j}$ $ \{maxc(r,G_j,y)\}$ decreases if $r$ changes to $r'$ since no agent moves closer to $y$. 

    Therefore, the approximation ratio $\rho'$ is at most
    \begin{align*}
        & \frac{(y^*\!+\!maxc(r',G_p,y^*)\!-\!y)\!+\!(y^*\!+\!maxc(r',G_p,y^*)-y)}{(y^*+maxc(r',G_p,y^*)\!-\!y^*)}\\
        &= \frac{2(y^*-y)+2maxc(r',G_p,y^*)}{maxc(r',G_p,y^*)} 
        \leq \frac{4(y^*-y)}{y^*-y} = 4.
    \end{align*}
    
    For $IIF_2$, we can use a similar argument as $IIF_1$ since it only focuses on group $p$ and all inequalities are also valid under this objective. 
\end{proof}


\begin{figure}
    \centering
    \includegraphics[width=0.7\linewidth]{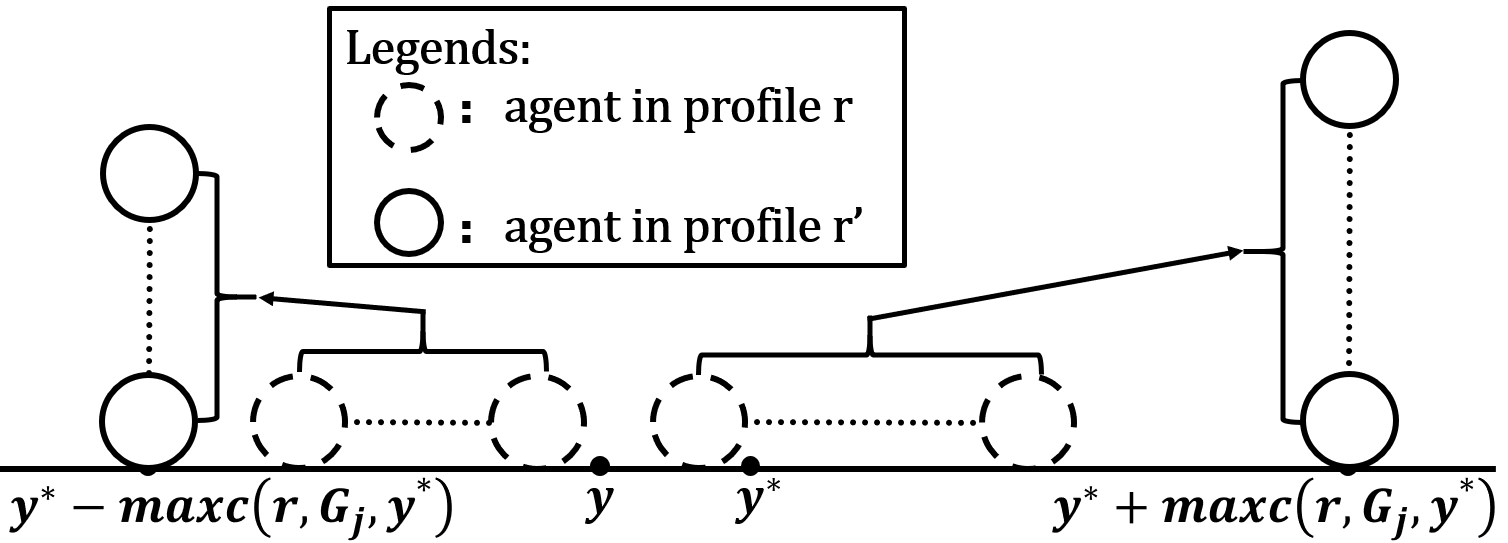}
    \caption{Movement example for group $j$ in the proof of Theorem~\ref{thm:IIF1ub}.}
    \label{fig:ub_IIF_move}
\end{figure}

In the following proofs for lower bounds, we use the definition below.

\begin{definition}\cite{Lu:2010aa}
    A mechanism $f$ is \textbf{partial group strategyproof} if and only if a group of agents at the same location cannot benefit even if they misreport their locations simultaneously. More formally, given any profile $r$, let $S\subseteq N$ and all the agents in $S$ have the same location and  $r_S'=\{x_S', g_S\}$ be a profile with the false location reported by agents in $S$. We have $c(f(r), x_S) \leq c(f(r_S', r_{-S}), x_S)$ 
    where $r_{-S}$ is the profiles reported by all agents except the agents in $S$.
\end{definition}

From the definition, we know that any partial group strategyproof mechanism is also strategyproof. Furthermore, it has been shown that any strategyproof mechanism is also partial group strategyproof in facility location problems \cite{Lu:2010aa}. Thus, we will not distinguish between strategyproof and partial group strategyproof in the following analysis. 

\begin{theorem}\label{IIF1_Lower}
Any deterministic strategyproof mechanism has an approximation ratio of at least $4$ for minimizing $IIF_1$ and $IIF_2$.
\end{theorem}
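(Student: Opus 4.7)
The plan is to prove the lower bound of $4$ uniformly for both $IIF_1$ and $IIF_2$ by exhibiting, for any deterministic SP mechanism $f$ and any integer $k \geq 1$, a profile on which $f$'s ratio is at least $(4k+2)/(k+1)$; letting $k \to \infty$ then yields the desired bound.

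First I would consider the symmetric two-group profile $r_1^{(k)}$, where group $G_1$ contains one agent at $0$ and $k$ agents at $1$, and group $G_2$ contains $k$ agents at $0$ and one agent at $1$. A direct calculation shows that both $IIF_1$ and $IIF_2$ attain their common minimum $1/2$ at $y^* = 1/2$ and take the common value $(2k+1)/(k+1)$ at each of $y \in \{0, 1\}$; outside $[0,1]$ both values are strictly larger. Hence if $y_k := f(r_1^{(k)}) \notin (0,1)$, we are done immediately, since the ratio on $r_1^{(k)}$ is at least $(4k+2)/(k+1)$.

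Otherwise $y_k \in (0,1)$, and I would apply two SP reductions in succession to force the mechanism to keep outputting $y_k$ on a tighter profile. The $k$ co-located $G_1$-agents at $1$ are jointly relocated to $y_k$ via partial-group SP (which coincides with ordinary SP in facility location by the cited result of Lu et al.); in the resulting profile $r_2^{(k)}$ these agents are truthfully at $y_k$ and could always lie by reporting $1$, which recovers $r_1^{(k)}$ with facility already at $y_k$ (a zero-cost lie), so SP forces $f(r_2^{(k)}) = y_k$. The same one-agent trick applied to $G_2$'s lone agent at $1$ yields $r_3^{(k)}$, in which $G_1$ has one agent at $0$ and $k$ at $y_k$ while $G_2$ has $k$ at $0$ and one at $y_k$, and $f(r_3^{(k)}) = y_k$.

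On $r_3^{(k)}$ the optimum is at $y = y_k/2$, where every agent is at distance exactly $y_k/2$ from the facility within its group, so both objectives equal $y_k/2$. At $y = y_k$, however, $G_2$ alone contributes $\mathrm{avgc} + (\mathrm{maxc}-\mathrm{minc}) = k y_k/(k+1) + y_k = (2k+1) y_k/(k+1)$, and this dominates both $IIF_1$ and $IIF_2$, giving ratio $(4k+2)/(k+1) \to 4$. The main technical point is orchestrating the SP chain: each ``move to $y_k$'' step relies on the reverse-direction zero-cost lie, and partial-group SP is exactly what licenses the simultaneous relocation of the $k$ co-located $G_1$-agents; the endpoint cases $y_k \in \{0,1\}$ (which would make $r_3^{(k)}$ degenerate or equal to $r_1^{(k)}$) are precisely the ones already handled in the base-profile step, so no gap arises.
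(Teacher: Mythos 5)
Your proposal is correct and follows essentially the same route as the paper: the same two-group profile with one $G_1$-agent and $k$ $G_2$-agents at $0$ and $k$ $G_1$-agents and one $G_2$-agent at $1$, the same use of the Lu et al.\ equivalence between strategyproofness and partial group strategyproofness to relocate the co-located agents at $1$ to the facility's position, and the same limiting computation $(4k+2)/(k+1)\to 4$. The only differences are presentational (you force $f$ to keep outputting $y_k$ on the contracted profile and evaluate the ratio there, rather than fixing $\epsilon$ and deriving a contradiction, and you split the relocation into two partial-group steps instead of one), so no further comparison is needed.
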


\begin{proof}
    First, we prove the lower bound of $IIF_1$. Assume for contradiction that there exists a strategyproof mechanism $f$ where the approximation ratio is $4-\epsilon$, $\epsilon>0$. According to the equivalence between strategyproofness and partial group strategyproofness in the facility location problem, $f$ is also partial group strategyproof.
    
    Consider a profile $r$ with one agent in $G_1$ and $\lceil \frac{2}{\epsilon} \rceil$ agents in $G_2$ at $0$, $\lceil \frac{2}{\epsilon} \rceil$ agents in $G_1$ and one agent in $G_2$ at $1$. Assume without loss of generality that $f(r) = \frac{1}{2}+\Delta$, $\Delta >0$. Now, consider the profile $r'$ where one agent in $G_1$ and $\lceil \frac{2}{\epsilon} \rceil$ agents in $G_2$ are at $0$, $\lceil \frac{2}{\epsilon} \rceil$ agents in $G_1$ and one agent in $G_2$ are at $1/2+\Delta$. The optimum is the average of the two locations, namely $1/4 + \Delta/2$, which has an $IIF_{1}$ of $1/4 + \Delta/2$. If the mechanism is to achieve an approximation ratio of $4-\epsilon$, the facility must be placed in $(0,\frac{1}{2}+\Delta)$. In that case, given the profile $r'$, the agents at $\frac{1}{2}+\Delta$ can benefit by misreporting to $1$, thus moving the solution to $1/2 + \Delta$, in contradiction to partial group strategyproofness.
    We can extend this result to $m$ groups by locating $\lceil \frac{2}{\epsilon} \rceil$ agents at $0$ and one agent at $1$ for each group except $G_1$ and $G_2$ in profile $r$, locating $\lceil \frac{2}{\epsilon} \rceil$ agents at $0$ and one agent at $\frac{1}{2}+\Delta$ for each group except $G_1$ and $G_2$ in profile $r'$.
    
    We can reuse the profiles to prove the lower bound of $IIF_2$ since we use the profiles where intergroup and intragroup fairness is achieved by the same group. 
\end{proof}


It is interesting to see that when one only considers intragroup fairness, only the additive approximation is possible. When we combine it with intergroup fairness, a tight multiplicative approximation can be obtained for $IIF_{1}$ and $IIF_{2}$. 

\section{Conclusion}\label{sec::con}
\noindent
We study the problem of designing strategyproof mechanisms in group-fair facility location problems (FLPs), where agents are in different groups, under several well-motivated group-fair objectives. We first show that not all group-fair objectives admit a bounded approximation ratio. Then we give the complete results of three classical mechanisms and two new mechanisms for minimizing two group-fair objectives, and we show that it is possible to design strategyproof mechanisms with constant approximation ratios that leverage group information. We also introduce Intergroup and Intragroup Fairness (IIF), which takes both fairness between groups and within each group into consideration. We study two natural IIF objectives and provide a mechanism that achieves a tight approximation ratio for both objectives. 

Naturally, there are many potential future directions for the group-fair FLPs in mechanism design. 
For the group-fair FLPs under the considered group-fair objectives, 
an immediate direction is to tighten the gaps between the lower and upper bounds of our results. 
Moreover, one can consider alternative group-fair objectives that are appropriate for specific application domains. 

\clearpage
\bibliographystyle{named}
\bibliography{ijcai22}

\begin{thebibliography}{}

\bibitem[\protect\citeauthoryear{Aziz \bgroup \em et al.\egroup
  }{2020a}]{Aziz:2020aa}
Haris Aziz, Hau Chan, Barton~E. Lee, Bo~Li, and Toby Walsh.
\newblock Facility location problem with capacity constraints: Algorithmic and
  mechanism design perspectives.
\newblock In {\em AAAI Conf. on Artificial Intelligence}, 2020.

\bibitem[\protect\citeauthoryear{Aziz \bgroup \em et al.\egroup
  }{2020b}]{Aziz:2019aa}
Haris Aziz, Hau Chan, Barton~E. Lee, and David~C. Parkes.
\newblock The capacity constrained facility location problem.
\newblock {\em Games and Economic Behavior}, 124:478 -- 490, 2020.

\bibitem[\protect\citeauthoryear{Barocas and Selbst}{2016}]{Barocas:2016aa}
Solon Barocas and Andrew~D. Selbst.
\newblock Big data's disparate impact.
\newblock {\em California Law Review}, 104(3):671--732, 2016.

\bibitem[\protect\citeauthoryear{Cai \bgroup \em et al.\egroup
  }{2016}]{Cai:2016aa}
Qingpeng Cai, Aris Filos-Ratsikas, and Pingzhong Tang.
\newblock Facility location with minimax envy.
\newblock In {\em Proc. Twenty-Fifth Intl. Joint Conf. Artif. Intell.}, pages
  137--143, 2016.

\bibitem[\protect\citeauthoryear{Celis \bgroup \em et al.\egroup
  }{2018}]{Celis:2018aa}
L.~Elisa Celis, Damian Straszak, and Nisheeth~K. Vishnoi.
\newblock {Ranking with Fairness Constraints}.
\newblock In {\em 45th International Colloquium on Automata, Languages, and
  Programming (ICALP 2018)}, volume 107, pages 28:1--28:15, 2018.

\bibitem[\protect\citeauthoryear{Chan \bgroup \em et al.\egroup
  }{2021}]{survey}
Hau Chan, Aris~Filos Ratsikas, Bo~Li, Minming Li, and Chenhao Wang.
\newblock Mechanism design for facility location problem: A survey.
\newblock In {\em Proc. Twenty-Fifth Intl. Joint Conf. Artif. Intell.. Survey
  Track}, 2021.

\bibitem[\protect\citeauthoryear{Chen \bgroup \em et al.\egroup
  }{2020}]{DBLP:conf/aaim/ChenFLD20}
Xin Chen, Qizhi Fang, Wenjing Liu, and Yuan Ding.
\newblock Strategyproof mechanisms for 2-facility location games with minimax
  envy.
\newblock In {\em Algorithmic Aspects in Information and Management - 14th
  International Conf., {AAIM} 2020, Jinhua, China, August 10-12, 2020,
  Proceedings}, volume 12290, pages 260--272, 2020.

\bibitem[\protect\citeauthoryear{Ding \bgroup \em et al.\egroup
  }{2020}]{ding2020facility}
Yuan Ding, Wenjing Liu, Xin Chen, Qizhi Fang, and Qingqin Nong.
\newblock Facility location game with envy ratio.
\newblock {\em Computers \& Industrial Engineering}, 148:106710, 2020.

\bibitem[\protect\citeauthoryear{Feldman and
  Wilf}{2013}]{feldman2013strategyproof}
Michal Feldman and Yoav Wilf.
\newblock Strategyproof facility location and the least squares objective.
\newblock In {\em Proc. 14th ACM Conf. on Electronic commerce (ACM-EC)}, pages
  873--890, 2013.

\bibitem[\protect\citeauthoryear{Filos-Ratsikas \bgroup \em et al.\egroup
  }{2017}]{Filos-Ratsikas:2017aa}
Aris Filos-Ratsikas, Minming Li, Jie Zhang, and Qiang Zhang.
\newblock Facility location with double-peaked preferences.
\newblock {\em Autonomous Agents and Multi-Agent Systems}, 31(6):1209--1235,
  2017.

\bibitem[\protect\citeauthoryear{Fotakis and Tzamos}{2014}]{Fotakis:2014aa}
Dimitris Fotakis and Christos Tzamos.
\newblock On the power of deterministic mechanisms for facility location games.
\newblock {\em ACM Trans. Econ. Comput.}, 2(4), 2014.

\bibitem[\protect\citeauthoryear{Haslam \bgroup \em et al.\egroup
  }{2014}]{haslam2014social}
S~Alexander Haslam, Daan Van~Knippenberg, Michael~J Platow, and Naomi Ellemers.
\newblock {\em Social identity at work: Developing theory for organizational
  practice}.
\newblock Psychology Press, 2014.

\bibitem[\protect\citeauthoryear{Liu \bgroup \em et al.\egroup
  }{2020}]{DBLP:conf/aaim/LiuDCFN20}
Wenjing Liu, Yuan Ding, Xin Chen, Qizhi Fang, and Qingqin Nong.
\newblock Multiple facility location games with envy ratio.
\newblock In {\em Algorithmic Aspects in Information and Management - 14th
  International Conf., {AAIM} 2020, Jinhua, China, August 10-12, 2020,
  Proceedings}, volume 12290, pages 248--259, 2020.

\bibitem[\protect\citeauthoryear{Lu \bgroup \em et al.\egroup
  }{2010}]{Lu:2010aa}
Pinyan Lu, Xiaorui Sun, Yajun Wang, and Zeyuan~Allen Zhu.
\newblock Asymptotically optimal strategy-proof mechanisms for two-facility
  games.
\newblock In {\em Proc. 11th ACM Conf. on Electronic Commerce}, pages 315--324,
  2010.

\bibitem[\protect\citeauthoryear{Marsh and Schilling}{1994}]{marsh1994equity}
Michael~T. Marsh and David~A. Schilling.
\newblock Equity measurement in facility location analysis: A review and
  framework.
\newblock {\em European Journal of Operational Research}, 74(1):1--17, 1994.

\bibitem[\protect\citeauthoryear{McAllister}{1976}]{mcallister1976equity}
Donald~M McAllister.
\newblock Equity and efficiency in public facility location.
\newblock {\em Geographical analysis}, 8(1):47--63, 1976.

\bibitem[\protect\citeauthoryear{Mei \bgroup \em et al.\egroup
  }{2016}]{limei2016newobjective}
Lili Mei, Minming Li, Deshi Ye, and Guochuan Zhang.
\newblock Strategy-proof mechanism design for facility location games:
  Revisited.
\newblock In {\em Proc. 15th International Conf. on Autonomous Agents and
  Multiagent Systems (AAMAS)}, pages 1463--1464, 2016.

\bibitem[\protect\citeauthoryear{Mulligan}{1991}]{Mulligan:1991ug}
Gordon~F. Mulligan.
\newblock Equality measures and facility location.
\newblock {\em Papers in Regional Science}, 70(4):345--365, 1991.

\bibitem[\protect\citeauthoryear{Procaccia and
  Tennenholtz}{2009}]{Procaccia:2009aa}
Ariel~D. Procaccia and Moshe Tennenholtz.
\newblock Approximate mechanism design without money.
\newblock In {\em Proc. 10th ACM Conf. on Electronic Commerce}, pages 177--186,
  2009.

\bibitem[\protect\citeauthoryear{Rutherglen}{1987}]{Rutherglen:1987aa}
George Rutherglen.
\newblock Disparate impact under title vii: An objective theory of
  discrimination.
\newblock {\em Virginia Law Review}, 73(7):1297--1345, 1987.

\bibitem[\protect\citeauthoryear{Serafino and
  Ventre}{2015}]{paolo2015heterogeneous}
Paolo Serafino and Carmine Ventre.
\newblock Truthful mechanisms without money for non-utilitarian heterogeneous
  facility location.
\newblock In {\em Proc. 29th Conf. on Artificial Intelligence (AAAI)}, pages
  1029--1035, 2015.

\bibitem[\protect\citeauthoryear{Sui and Boutilier}{2015}]{Sui:2015aa}
Xin Sui and Craig Boutilier.
\newblock Approximately strategy-proof mechanisms for (constrained) facility
  location.
\newblock In {\em Proc. 2015 International Conf. on Autonomous Agents and
  Multiagent Systems}, pages 605--613, 2015.

\bibitem[\protect\citeauthoryear{Sui \bgroup \em et al.\egroup
  }{2013}]{Sui:2013aa}
Xin Sui, Craig Boutilier, and Tuomas Sandholm.
\newblock Analysis and optimization of multi-dimensional percentile mechanisms.
\newblock In {\em Proc. Twenty-Third Intl. Joint Conf. Artif. Intell.}, pages
  367--374, 2013.

\bibitem[\protect\citeauthoryear{Tsang \bgroup \em et al.\egroup
  }{2019}]{Tsang:2019aa}
Alan Tsang, Bryan Wilder, Eric Rice, Milind Tambe, and Yair Zick.
\newblock Group-fairness in influence maximization.
\newblock In {\em Proc. Twenty-Eighth Intl. Joint Conf. Artif. Intell.,
  {IJCAI-19}}, pages 5997--6005, 2019.

\bibitem[\protect\citeauthoryear{Zimmer}{1996}]{Zimmer:1996aa}
Michael~J. Zimmer.
\newblock The emerging uniform structure of disparate treatment discrimination
  litigation.
\newblock {\em Georgia Law Review}, 30:563, 1996.

\bibitem[\protect\citeauthoryear{Zou and Li}{2015}]{DBLP:conf/atal/ZouL15}
Shaokun Zou and Minming Li.
\newblock Facility location games with dual preference.
\newblock In {\em Proc. 14th International Conf. on Autonomous Agents and
  Multiagent Systems (AAMAS)}, pages 615--623, 2015.

\end{thebibliography}

\clearpage
\appendix
\section*{Counter Examples of Other Considered Mechanisms}
1. Choosing the median of all group median agents cannot achieve better approximation ratios for both objectives.
\begin{itemize}
    \item For the $mtgc$, consider a profile with $m$ agents and $m$ groups $G_1,...,G_m$ where all agents at $0$ belong to different groups, and $m$ agents at $1$ belong to $G_1$. The mechanism puts the facility at $0$ achieving the $mtgc$ of $m$, but putting the facility at location $1$ can achieve the optimal $mtgc$ value $1$. Hence, the approximation ratio for the $mtgc$ is at least $m$.
    \item For the $magc$, consider a profile where $k$ agents in group $G_1$ are at 0, $k-1$ agents in group $G_1$ are at $\frac{2}{3}$, and one agent in group $G_2$ is at 1. The mechanism puts the facility at $0$ achieving the $magc$ of $1$, but the optimal solution puts the facility at $2/3$ achieving the $magc$ of $1/3$. Hence, the approximation ratio for the $magc$ is at least $3$.
\end{itemize}
2. Choosing the median of other group cannot achieve better approximation ratios for both objectives. Without loss of generality we assume that we put the facility at the median of group $G_k$.
\begin{itemize}
    \item For the $mtgc$, consider a profile where $|G_k|/2$ agents in group $G_k$ are at 0, $|G_k|/2$ agents in group $G_k$ are at $\frac{2|G_{max}|}{2|G_{max}|+|G_k|}$, and $|G_{max}|$ agents in group $G_{max}$ are at 1. The mechanism puts the facility at $0$ achieving the $mtgc$ of $|G_{max}|$, but putting the facility at location $2/3$ can achieve the optimal $mtgc$ value $\frac{|G_k||G_{max}|}{2|G_{max}|+|G_k|}$. Hence, the approximation ratio for the $mtgc$ is at least $\frac{2|G_{max}|+|G_k|}{|G_k|}\ge 3$.
    \item For the $magc$, consider a profile where $|G_k|/2$ agents in group $G_k$ are at 0, $|G_k|/2$ agents in group $G_k$ are at $2/3$, and $|G_p|$ agents in group $G_p$ ($p\neq k$) are at 1. The mechanism puts the facility at $0$ achieving the $magc$ of $1$, but the optimal solution puts the facility at $2/3$ achieving the $magc$ of $1/3$. Hence, the approximation ratio for the $magc$ is at least $3$.
\end{itemize}














\end{document}